\documentclass[journal]{IEEEtran}

\usepackage{url}
\usepackage[utf8]{inputenc} 
\usepackage[nolist,nohyperlinks]{acronym}

\usepackage[ruled,vlined]{algorithm2e}
\usepackage{amsmath}
\usepackage{amsfonts}
\usepackage{amssymb}
\usepackage{amsthm}
\usepackage{dsfont}
\usepackage{array}
\usepackage{multirow}
\usepackage{lipsum}
\usepackage{mathtools}
\usepackage{cuted}
\usepackage[caption=false]{subfig} 
\usepackage{bbm}
\usepackage{siunitx}
\usepackage{bm}
\usepackage{booktabs}

\usepackage{xcolor}

\usepackage{stfloats}
\usepackage{rotating} 
\usepackage{cases}

\usepackage{lineno}
\DeclareMathOperator{\EX}{\mathbb{E}}

\newcommand{\setS}{\mathcal{S}}

\newcommand{\setK}{\mathcal{K}}

\newcommand{\setT}{\mathcal{T}}

\newcommand{\setD}{\mathcal{D}}

\newcommand{\setH}{\mathcal{H}}
\newcommand{\setE}{\mathcal{E}}

\newcommand{\bx}{\bm{x}}
\newcommand{\bp}{\bm{p}}

\newcommand{\bw}{\vec{w}}

\newcommand{\bg}{\vec{g}}
\newcommand{\by}{\vec{y}}
\newcommand{\bn}{\vec{n}}

\newcommand{\estimG}{\tilde{\bg}}

\newcommand{\MSEOTA}{\ensuremath{\mathrm{MSE}_{\mathrm{OTA}}}}

\DeclareSIUnit \belm {Bm}

\newcommand{\R}{\ensuremath{\mathbb{R}}}  
\newcommand{\C}{\ensuremath{\mathbb{C}}}  

\renewcommand{\vec}[1]{\ensuremath{\bm{\MakeLowercase{#1}}}}

\newcommand{\defeq}{\ensuremath{\triangleq}} 

\usepackage{dsfont}
\newcommand{\Indic}{\mathds{1}}

\usepackage{enumitem}
\usepackage{cleveref}
\newtheorem{proposition}{Proposition}
\newtheorem{assumption}{Assumption}
\newtheorem{corollary}{Corollary}

\crefname{assumption}{assumption}{assumptions}
\Crefname{assumption}{Assumption}{Assumptions}
\newtheorem{theorem}{Theorem}
\newtheorem{lemma}{Lemma}

\newcommand\blfootnote[1]{%
  \begingroup
  \renewcommand\thefootnote{}\footnote{#1}%
  \addtocounter{footnote}{-1}%
  \endgroup
}

\begin{document}
\bstctlcite{MyBSTcontrol}
\title{
Analytically Characterized Optimal Power Control for 
Signal-Level-Integrated Sensing, Computing and Communication in Federated Learning
}

\author{
Paul Zheng,
Yao Zhu, 
Xiaopeng Yuan,
Yulin Hu, and
Anke Schmeink

\thanks{Part of this work has been presented in IEEE ICC 2024~\cite{Zheng_ICCwkp_2024}.}
\thanks{
P. Zheng, X. Yuan, and A. Schmeink are with Chair of Information Theory
and Data Analytics, RWTH Aachen University, Germany. (Email:~$zheng|yuan|schmeink$@inda.rwth-aachen.de)
}
\thanks{
Y. Zhu, and Y. Hu are with School of Electronic Information,
Wuhan University, 430072 Wuhan, China
(Email: $yao.zhu|yulin.hu$@whu.edu.cn)
}
}

\maketitle

\blfootnote{Y. Zhu and Y. Hu are the corresponding authors.}

\begin{abstract}
In the Internet-of-Things (IoT) era, efficient functionality integration is essential to address the growing demands of communication, computation, and sensing. Signal-level integrated sensing, computing, and communication (Sig-ISCC) is envisioned, where a single waveform simultaneously supports sensing, computing and communication via over-the-air computation (AirComp).
Meanwhile, federated learning (FL) is widely regarded as a promising distributed machine learning framework that enables network intelligence in a privacy-preserving and secure manner, and exhibits strong synergy with AirComp, which alleviates the communication bottleneck of FL.
In this paper, we study uplink Sig-ISCC design for AirComp-FL with joint target detection.
We formulate the joint power and receive-scaling control problem, where edge devices' transmitted signals should serve both sensing and AirComp purposes. The goal is to minimize the AirComp aggregation distortion subject to a joint target-detection requirement.
Although the resulting problem is non-convex in the original variables, we show that it admits an equivalent convex reformulation after a suitable variable transformation.
By exploiting analytical optimality properties, we develop a robust, optimal, and polynomial-time-complexity algorithm that efficiently achieves the optimal transmit powers and receive scaling factor.
Simulation results validate the optimality and numerical robustness of the proposed algorithm and show its superior FL performance compared to baseline methods.
\end{abstract} 

\begin{IEEEkeywords}
Federated Learning, Integrated Sensing, Computing and Communication (ISCC), Target Detection, Over-the-Air Computation (AirComp), 6G.
\end{IEEEkeywords}

\IEEEpeerreviewmaketitle

\acrodef{DEL}{distributed edge learning}
\acrodef{FL}{federated learning}
\acrodef{CR}{communication round}
\acrodef{AirComp}{over-the-air computation}
\acrodef{AirFEEL}{over-the-air federated edge learning}
\acrodef{ISAC}{integrated sensing and communication}
\acrodef{ISCC}{integrated sensing, computing and communication}
\acrodef{Sig-ISCC}{signal-level \ac{ISCC}}
\acrodef{BS}{base station}
\acrodef{MINLP}{mixed-integer nonlinear programming}
\acrodef{MEC}{mobile edge computing}
\acrodef{RIS}{reconfigurable intelligent surface}
\acrodef{ED}{edge device}
\acrodef{CSI}{channel state information}
\acrodef{NLP}{non-linear programming}

\section{Introduction}

\Ac{DEL} paradigms such as \ac{FL}~\cite{mcmahan_FL_2017} are becoming increasingly important owing to their ability to preserve data privacy and reduce communication costs~\cite{survey_DL2FL_2022}. \Ac{FL} trains a global model by repeatedly exchanging locally trained model updates (instead of raw data) between \acp{ED} and a parameter server. It proceeds iteratively in the form of \acp{CR}: 
in each \ac{CR}, a subset of selected \acp{ED} perform local training and upload their updates to the server for aggregation, i.e., simple (weighted) averaging in the original FedAvg~\cite{mcmahan_FL_2017}.
This framework has been considered for, and in some cases already deployed in various real-world mobile edge applications~\cite{Li2020_reviewFL_applications}, e.g., Google keyboard prediction~\cite{hard_2019_FLmobilekeyboard}. 
Since most \acp{ED} are wirelessly connected and frequent exchange of model updates is required, wireless communication constitutes a crucial bottleneck for such \ac{DEL} systems~\cite{Chen_DistributedlearningWireless_survey_2021}.

Since the server only requires the average of the local updates, and neural networks are robust to small perturbations of model parameters, analog transmission leveraging \ac{AirComp} has emerged as an effective technique to address the communication bottleneck of \ac{FL}, yielding what is known as \ac{AirFEEL}~\cite{yang_FL_OTA_2020,Amiri_OTA-FL_2020,zhu_OTA-FL_2020}. \Ac{AirComp} exploits the superposition property of multi-access channels: with appropriate pre- and post-processing, analog signals are transmitted and certain nomographic functions of the transmitted signals (here, the average) are obtained directly over the air~\cite{Goldenbaum_2013_harnessingInterferenceOTA}. 
\Ac{AirFEEL} enables model aggregation across massive numbers of \acp{ED}  in each \ac{CR} 
without increasing the number of communication resource blocks as the number of participating EDs grows.
Optimal power control for such schemes has been studied in~\cite{Cao_2022_OptimalPowerControlOTAFedAvg,Zhang_Tao_2021_GradientStatPowerContFLOTA}, and for multi-antenna base stations, joint client selection and receive beamforming were designed in~\cite{yang_FL_OTA_2020}. The convergence of such schemes under heterogeneous data statistical distributions has been established in~\cite{Cao_2022_OptimalPowerControlOTAFL, Sery_OTAFLHetero_2021}.

One of the most promising 6G technologies is \ac{ISAC}~\cite{Liu_ISAC_2022,6Gisac}, which integrates radar sensing and communication functionalities into a unified system. 
Since \ac{AirComp} integrates communication and computation, combining \ac{ISAC} with \ac{AirComp} leads to signals that simultaneously serve sensing, computing, and communication, referred to as \ac{Sig-ISCC}.
Such a framework was first proposed in~\cite{Qi_ISCC_2022} using distinct superposed coded signals for each functionality. A similar framework for vehicular networks was addressed via deep reinforcement learning in~\cite{Yang_SigISCC_SigAddition_DRL_2024}.
The fully integrated case (i.e., using the same coded signal for all three functionalities) was studied in~\cite{isacota_Li_2023} and subsequently extended to Orthogonal Frequency-Division Multiplexing (OFDM) systems in~\cite{Dong_SigISCC_OFDM_2025}. This framework has also been applied to robotic scenarios, where the sensing performance is characterized by the beampattern difference from the optimal sensing beampattern~\cite{Dong_SigISCC_Robotic_2024}.

We note that, in addition to the presented Sig-\ac{ISCC} framework, two other types of \ac{ISCC} widely exist in the literature.
One is the \emph{coexistence design} of \ac{ISAC} and computing services, where computation may be performed in a traditional manner via a \ac{MEC} server or via \ac{AirComp} signal superposition. Zhao et al.~\cite{Zhao_ISCC_ISACMEC_MultiFunctionalBeamforming_2025} proposed a joint communication and sensing beamforming design for \ac{ISAC}-aided \ac{MEC} systems, in which the \ac{BS} allocates resources across all functionalities and can partially offload computation tasks to nearby edge servers. Resource allocation for a similar system supporting federated learning was studied in~\cite{Liu_ISCC_ISACMEC_MultiTaskLearning_2024} based on multi-task learning. In the context of AirFEEL, the coexistence design has focused on the coexistence of the BS performing ISAC while \acp{ED} conducting uplink aggregation via AirComp under interference from ISAC signals.
Asaad et al.~\cite{Asaad_SigISCC_FEEL_ISAC_SchedulingBeamforming_2025} considered a framework where the \ac{BS} performs downlink \ac{ISAC}, and the uplink AirComp aggregation is affected by interference from target echoes of the \ac{BS}'s \ac{ISAC} signals. They proposed joint client scheduling and \ac{BS} beamforming to maximize the number of scheduled \acp{ED} while guaranteeing sensing and AirComp performance. A similar framework was studied in~\cite{Du_SigISCC_SigAddition_FL_2024}, where the BS continues downlink ISAC while receiving the uplink AirComp model aggregation. 
Another type can be termed \emph{functional-level} \ac{ISCC}. It treats sensing as a data acquisition stage for an edge computing or learning task, and communication as the means to transmit sensed data, extracted features, or learned updates to an edge server for computing or fusion.
In this stream, the key coupling lies in how sensing quality, communication distortion and latency, and computing resources jointly affect task-level learning or inference metrics.
Recent works study \ac{AirFEEL} while explicitly accounting for sensing noise and aggregation distortions~\cite{Wen_tISCC_Air_FL_2026}, and \ac{AirComp} has also been adopted for over-the-air fusion of sensed features for edge AI inference~\cite{Zhuang_tISCC_Air_AIInference_2024}; extensions to online learning, unmanned aerial vehicle (UAV)-enabled collection, and generalization-error analyses can be found in~\cite{Hu_tISCC_Air_OnlineFL_cache_2025,Fu_tISCC_Air_MultiUAV_FL_2025,Liu_tISCC_Air_SparseSpatialFeaturesEdgeAI_2025,Chen_tISCC_ViewAndChannel_EdgeAI_2024,Cai_tISCC_Air_CharacterizeGeneralizationError_2025,Liang_tISCC_normalComm_FL_ResourceAlloccation_2025}.
In parallel, functional-level \ac{ISCC} under conventional digital links has been studied for task-oriented edge inference~\cite{Wen_tISCC_normalComm_TaskOrientedEdgeAI_2023} and federated edge learning~\cite{Liu_tISCC_normalComm_FL_TaskOriented_2023}, with numerous extensions covering privacy, vertical \ac{FL}, vehicular perception, and aerial deployments (see, e.g.,~\cite{Zhou_tISCC_normalComm_FL_2026,Zhu_SemanticReliability_ISCC_2023,Hu_tISCC_normalComm_DPrivate_WFL_2025,Jiao_tISCC_normalComm_FL_modeldropout_2025,Liu_tISCC_normalComm_VerticalFL_2022,Li_tISCC_normalComm_cooperativePerceptionVehicles_2025,Yang_tISCC_normalComm_HFL_SpaceAirGround_2025}).

This paper focuses on \ac{Sig-ISCC}, the tightest level of integration of the three functionalities, since sensing, computing, and communication are all realized through the same waveform. Such a high degree of integration offers the substantial potential for resource efficiency.
Applying \ac{Sig-ISCC} to \ac{AirFEEL} implies that the \emph{uplink} AirComp signal is also used for the sensing task. The sensing requirement couples non-trivially with the AirComp aggregation quality, necessitating joint resource design. Only a few works in the literature  consider such \ac{Sig-ISCC} for FL.
Du et al.~\cite{Du_SigISCC_FL_preSensingPhase_2025} employed superposed coded sensing and communication signals (as in~\cite{Qi_ISCC_2022}) and exploited the communication component for sensing instead of treating it purely as interference. Pang and Zhang~\cite{Pang_ZhangXi_SigISCC_FL_coopInt_2025} proposed a holistic framework combining signal-level and functional-level \ac{ISCC}, where deep learning modules infer \ac{ED} locations/CSI from received superposed communication signals and then use the inferred CSI for subsequent communication and learning designs. In our previous work~\cite{zheng_FL-OTA-ISAC_2023}, we considered an \emph{individual} sensing requirement for each uplink transmitter, which translates into per-\ac{ED} minimum power constraints; this implies additional sensing transmissions from \acp{ED} that are not selected for AirComp-FL and thus introduces extra interference to AirComp aggregation (see also~\cite[Section 3.3]{Zhu_FL-ISAC-survey_2023}).
However, in many practical deployments, sensing targets and EDs are widely distributed and may exhibit heterogeneous sensing geometries.
Enforcing an individual sensing requirement for each \ac{ED} is unrealistic and unnecessary, as \acp{ED} can jointly perform sensing tasks.
We therefore study such Sig-ISCC for FL with a \emph{joint target detection} task~\cite{cheng_XuJie_ISACDetection_2022} and impose a joint sensing requirement at the \ac{BS} by fusing sensed information across \acp{ED}.

This work investigates therefore the tightest form of ISCC, Sig-ISCC, considering joint detection as the sensing task.
In the considered system, \acp{ED} participate in AirComp aggregation while performing joint detection of a target. 
Conventional AirComp power-control designs typically rely on ranking \acp{ED} according to their channel gains, with the optimal power taking either the maximum power or the channel-inverting level~\cite{Cao_2022_OptimalPowerControlOTAFedAvg,Liu_OTA_TWC_2020}. The additional joint sensing requirement breaks this monotonicity structure due to the coupling of communication and sensing performance, and makes the power allocation substantially more challenging.
The main contributions are summarized as follows:
\begin{itemize}
    \item We propose, to the best of our knowledge, the first uplink shared-waveform power control framework for signal-level integrated sensing, computing, and communication (Sig-ISCC) in AirFEEL. Unlike existing works on joint sensing and AirComp-based FEEL, which either operate at the \emph{functional level} by treating sensing as a data collection procedure for FEEL, or consider the \emph{coexistence level} of integration through the coexistence of downlink ISAC and uplink AirComp, our framework exploits a single shared waveform to simultaneously support communication and computation via AirComp-FL, and joint target detection.
    \item We formulate the resulting design as a joint transmit-power and receive-scaling optimization problem, with the objective of minimizing the convergence gap of AirComp-FL while satisfying a certain joint target detection task requirement.
    \item Based on our analytical findings, we reformulate the original nonconvex problem as an equivalent convex optimization problem via a proposed variable transformation.
     Building on this, we establish key structural optimality properties showing that the optimal solution can be characterized through the root of a monotone differentiable function. Based on this characterization, we develop a polynomial-time algorithm with provable global optimality and robustness to root-finding inexactness.
    \item Simulation results validate the optimality, confirm the robustness of the proposed algorithm, and demonstrate the importance of optimal power control for AirComp-FL performance under sensing constraints.
\end{itemize}

The remaining sections are organized as follows. Section~II describes the system model. Section~III derives closed-form optimality properties and develops the globally optimal algorithm. Section~IV presents simulation results, and Section~V concludes the paper.

\section{System Model}
\subsection{Federated Learning}

Consider a system with a set $\setK$ of $K$ \acp{ED} and a single-antenna \ac{BS}.
We consider a supervised learning task for FL. \Ac{ED}~$k\in \setK$ serves as a participating client and possesses the local dataset $\setD_k =\{(\vec{u}_{ki},{v_{ki}})\}_{i=1}^{|\setD_k|}$ with $|\setD_k|$ labeled data samples, where $(\vec{u}_{ki},{v_{ki}})$ denotes the~$i$-th data pair of \ac{ED}~$k$, consisting of input data~$\vec{u}_{ki}$ and its label~${v_{ki}}$. 
The local empirical loss function for \ac{ED} $k$ is expressed as~$F_k(\bw) = \frac{1}{|\setD_k|} \sum_{(\vec{u}_{ki},v_{ki}) \in \setD_k} \ell (\bw;\vec{u}_{ki}, v_{ki})$, 
where $\ell(\bw;\vec{u}_{ki}, v_{ki})$ is the loss of the prediction on the sample pair $(\vec{u}_{ki}, v_{ki})$ calculated with model parameters $\bw\in\R^m$. Balanced dataset sizes are assumed throughout the rest of this work as in~\cite{yang_FL_OTA_2020, wang_FL_OTA_RIS_2022}, i.e.,
$|\setD_k| = D_{\mathrm{loc}}$, $\forall k \in \setK$.
The objective of the FL training is to minimize the global loss function $ F(\bw) \defeq \frac{1}{K} \sum_{k \in \setK} F_k(\bw),$
that is, $\min_{\bw\in\R^m} F(\bw)$.
Each \ac{CR} $t$ of FedAvg proceeds as follows:
\begin{itemize}
    \item The BS broadcasts the current global model ${\bw}^{[t-1]}$ to \acp{ED}.
    \item For each \ac{ED}~$k\in \setK$, 
    one (resp. several) mini-batch (size $m^{(b)}_k$) stochastic gradient descent step (resp. steps)
    is performed on the local loss~$F_k$ starting from the received global model~${\bw}^{[t-1]}$ using the local dataset~$\setD_k$.
    \item Once all \acp{ED} have completed their local training, the BS transmits a pilot signal to all \acp{ED} for channel estimation. The estimated channel information is then fed back to the BS. 
    Based on this information, the power allocation problem (detailed in a later section) is solved and the results are fed back to the \acp{ED}. Each \ac{ED}~$k$ then sends its gradient (resp. the model difference after several iterations)~$\bg_k^{[t]}$ to the BS over a communication block of duration~$T$. The channel is assumed to be quasi-static flat fading within each \ac{CR}.
    \item The BS receives the gradient updates and averages them to obtain the aggregated gradient~$\bg^{[t]}$:
    \begin{equation} 
    \bg^{[t]} =\frac{1}{|\setK|} \sum_{k \in \setK}\bg_k^{[t]}.
    \label{eq: average_gradient}
    \end{equation}
    The BS then updates the global model:
    \begin{equation}
        \bw^{[t]} = \bw^{[t-1]} + \gamma \bg^{[t]},
    \end{equation}
    where~$\gamma>0$ is the learning rate (resp. $\gamma=1$ if the model-update difference is transmitted).
\end{itemize}

The \ac{CR} index~$t$ is omitted hereafter, as it does not affect the subsequent design.

\subsection{Joint Detection}

Each \ac{ED} is assumed to be equipped with a full-duplex transceiver 
with one transmit antenna\footnote{An omnidirectional beampattern has been shown effective in the high-SNR regime in~\cite{Wang_ISCC_BeampatternDesign_2024, Dong_SigISCC_Robotic_2024}. To facilitate the optimal power allocation design, we consider a single-antenna omnidirectional model. Other beamforming patterns may be similarly applied with adjusted channel gain.}, one receive antenna for communication, and~$N_r$ receive antennas for radar sensing. Coordinated joint target detection is performed following the model in~\cite{Cheng_XuJie_CoordinatedTransmitBeamforming_ISAC_2024}.
The steering vector of the receive antenna array, assumed to be a uniform linear array, is given by $\vec{a}_k(\theta_k)=[1,e^{j2\pi d_a\sin (\theta_k)/\lambda_c},\ldots,e^{j2\pi d_a(N_r-1)\sin (\theta_k)/\lambda_c}]$, where $\lambda_c$ is the carrier wavelength.
Each \ac{ED}~$k$ receives the reflected signal as:
\begin{equation}
\vec{r}_k(t)=\sum_{i\in\setK} \bm{G}_{i,k}\sqrt{p_i}s_i[t-\tau_{i,k}]+ \vec{n}_k,
\end{equation}
with $\bm{G}_{i,k} = \sqrt{\beta_{i,k}} \zeta_{i,k}\vec{a}_i(\theta_i)\in\C^{N_r}$  the target response vector from \ac{ED}~$i$ to receiving \ac{ED}~$k$, where $\zeta_{i,k}=\zeta$ denotes the radar cross section (RCS) and $\beta_{i,k}$ the round-trip path loss; $p_i$ is the transmit power of \ac{ED}~$i$;
$s_i$ denotes the waveform containing gradient information in its amplitude for $i\in\setK$, and $\vec{n}_k \sim \mathcal{CN}(0, \sigma_d^2\bm{I}_{N_r})$ is the receiver noise.

The transmitted signal is normalized to have zero-mean and unit variance.
The transmitted signals of different \acp{ED} are assumed to be independent.
The $k$-th \ac{ED} knows the echo delay $\tau_{k,k}$ of its own transmitted signal (if the target exists), and each \ac{ED} performs matched filtering with its own transmitted signal to extract the reflected signal:
\begin{equation}
\vec{d}_{k,k} = \frac{1}{T}\int_{\setT}\vec{r}_k(t)s_k(t-\tau_{k,k}) dt=\bm{G}_{k,k}\sqrt{p_k}+\vec{\tilde{z}}_{k,k},
\end{equation}
where~$\vec{\tilde{z}}_{k,k}\in\mathcal{CN}(0,\sigma_d^2\bm{I})$ is the equivalent noise after the filtering.
All extracted signals $\vec{d}_{k,k}$, $k\in\setK$, are then aggregated at the BS for joint detection.
The BS determines the presence of the target based on the following binary hypothesis test
\begin{equation}
\begin{cases}
\setH_1: &  \vec{d}_{\mathcal{I}}=\bm{G}_{\mathcal{I}}\sqrt{p_{\mathcal{I}}}+\vec{\tilde{z}}_{\mathcal{I}}\\
\setH_0: & \vec{d}_{\mathcal{I}}=\vec{\tilde{z}}_{\mathcal{I}},
\end{cases}
\end{equation}
where the subscript index vector is $\mathcal{I}=(k,k)_{k\in\setK}$.
With a likelihood ratio test, for a sufficiently long transmission time~$T$, the Neyman-Pearson detector gives a detection probability under a given false alarm probability $p_{FA}$ as:
\begin{equation}
p_D \approx Q\Big(Q^{-1}(p_{FA}) - \sqrt{\frac{2T^2\setE_{\mathcal{I}}}{\sigma_d^2}}\Big),
\end{equation}
with $Q$ denoting the Gaussian Q-function, $\setE_{\mathcal{I}} = \zeta^2N_r\sum_k\beta_{k,k} p_k=\sum_k b_kp_k$, where $b_k \defeq\zeta^2N_r\beta_{k,k} $. The detection probability $p_D$ is monotonically increasing with respect to $\setE_{\mathcal{I}}$. Therefore, given a false alarm probability and a detection probability threshold $p_{D,\text{th}}$, the minimum detection probability requirement can be expressed as
\begin{equation}
p_D \geq p_{D,\text{th}}\  \Leftrightarrow \ \setE_{\mathcal{I}} \geq \eta_D \defeq p_D^{-1}(p_{D,\text{th}}),
\end{equation}
where $p_D^{-1}$ is the inverse function of $p_D:\setE_{\mathcal{I}}\mapsto p_D(\setE_{\mathcal{I}})$. This translates into the following sensing constraint:
\begin{equation}
     \sum_{k\in\setK} p_kb_k \geq \eta_D.
\label{eq: sensing_cstr}
\end{equation}

\subsection{Over-the-Air Computation (\ac{AirComp})}

All selected \acp{ED} transmit their gradients over the same resource block via analog transmission.
The BS receives:
\begin{equation}
    \by = \sum_{k\in\setK} h_k \sqrt{p_k} \overline{\bg}_k +\bn,
\end{equation}
where~$h_k$ is the channel gain; 
$p_k$ is the transmit power; $\bn\sim \mathcal{CN}(0, \sigma_n^2\bm{I}_m)$ is Gaussian noise.
It is assumed that the target is sufficiently far from the \ac{BS} so that the interference from echo signals reflected off the target is negligible compared to the \ac{AirComp} communication signals.
Each gradient to be transmitted is normalized to zero mean and unit variance as in~\cite{zhu_OTA-FL_2020,zheng_FL-OTA-ISAC_2023}: $
    \overline{\bg}_k = \frac{\bg_k - \mu}{\Gamma},
$
where the normalizing mean~$\mu=\frac{1}{|\setK|} \sum_{k\in\setK}\mu_k$ and standard deviation~$\Gamma=\frac{1}{|\setK|}\sum_{k\in\setK} \Gamma_k$ are obtained by averaging the statistics of participating \acp{ED}:
$
    \mu_k = \frac{1}{m}\sum_{i=1}^m g_{k,i}$, $\Gamma_k = \sqrt{\frac{1}{m}\sum_{i=1}^m g_{k,i}^2 - \mu_k^2},
$
where $\bg_k=(g_{k,i})_{i=1,\ldots,m}$ is the gradient vector for \ac{ED}~$k$.
Note that $\mu$ and $\Gamma$ must be uniform across all \acp{ED} due to the nature of AirComp and are estimated prior to data transmission.
The maximum transmit power constraint~$P_{\max}$ for each \ac{ED}~$k$ transmitting at power $p_k$ is given by:
\begin{equation}
\frac{1}{m} \EX \|\sqrt{p_k} \overline{\bg}_k\|^2=p_k\leq P_{\max}.
\end{equation}
Upon receiving the superimposed AirComp signals through the multi-access channel, the BS rescales the received signal, reverses the normalization, and uses the resulting expression~$\tilde{\bg}$ as the estimated gradient for~$\bg$ in~\eqref{eq: average_gradient}:
$\tilde{\bg} = \frac{\Gamma\sqrt{\alpha}\by}{|\setK| } + \mu$,
with~$\alpha> 0$ the receive scaling factor.

The gradient recovery error can be derived as:
\begin{equation}
\begin{aligned}
    \varepsilon & = \estimG - \bg\\
    &=  \frac{\Gamma\sqrt{\alpha}\by}{|\setK| } + \mu - \frac{1}{|\setK|} \sum_{k \in \setK}(\Gamma\bg_k + \mu)\\[-.1cm]
    &= \frac{\Gamma}{|\setK|} \Bigg(
    \sum_{k\in\setK} \Big(h_k \sqrt{\alpha p_k} - 1\Big) \overline{\bg}_k + \sqrt{\alpha}\bn
    \Bigg).
\end{aligned}
\end{equation}

The resulting mean-square error (MSE) of AirComp is therefore~\cite{Cao_2022_OptimalPowerControlOTAFL}:
\begin{equation}
\begin{aligned}
    &\MSEOTA (\bm{p}, \alpha) = \EX \|\estimG - \bg\|^2 \\
   \approx &  \frac{\Gamma^2m}{|\setK|^2}\Big[ \sum_{k\in\setK}(h_k\sqrt{p_k\alpha} - 1)^2 + \alpha\sigma_n^2
   \Big],
\end{aligned}
\end{equation}
where $\bm{p}=(p_k)_{k\in\setK}$ denotes the power allocation vector.

\subsection{Convergence Analysis of This AirComp-FL Framework}
The convergence of this AirComp-FL framework has been extensively studied in the literature~\cite{Cao_2022_OptimalPowerControlOTAFL, Sery_OTAFLHetero_2021}. We follow the procedure of~\cite{Cao_2022_OptimalPowerControlOTAFL}, which builds on~\cite{signSGD_proofConvergence_2018}.

\begin{assumption}[Smoothness]
\label{assumption: smoothness}
The global loss function $F:\R^m\to\R$ is differentiable and satisfies the following coordinate-wise smoothness condition: there exists a non-negative vector $\bm{L}=[L_1,\ldots,L_m]$ such that, for any $\bw_1,\bw_2\in\R^m$,
\begin{equation}
    \left|F(\bw_2)\!-\!F(\bw_1) \!-\! \nabla\! F(\!\bw_1\! )^{\mathsf{T}}\!(\bw_2 \!- \! \bw_1)\right| \!
    \leq\!  \frac{1}{2}\!\sum_{i=1}^{m}\! L_i(w_{2,i}-w_{1,i}\!)^2.
\end{equation}
Denote $L_{\max}\defeq \max_{i=1,\ldots,m}L_i$.
\end{assumption}
\begin{assumption}[Variance Bound]
\label{assumption: variance bound}
The local gradient $g_k$ after local training satisfies
$
   (\forall k\in\setK)\  \EX[\bg_k] = \nabla F(\bw),
$
and
\begin{equation}
   (\forall i=1,\ldots,m)\  \EX\!\left[\left(g_{k,i}-[\nabla F(\bw)]_i\right)^2\right] \leq \frac{\sigma_i^2}{m^{(b)}_k},
\end{equation}
for some non-negative vector $\bm{\sigma}=[\sigma_1,\ldots,\sigma_m]$.
\end{assumption}
\begin{assumption}[Polyak-Lojasiewicz Inequality]
\label{assumption: PL inequality}
    There exists $\delta>0$ such that, for any $\bw\in\R^m$,
    \begin{equation}
        \|\nabla F(\bw)\|^2 \geq 2\delta (F(\bw) - F^*).
    \end{equation}
\end{assumption}
\begin{theorem}[Following Theorem 1 in \cite{Cao_2022_OptimalPowerControlOTAFL}]
\label{th: AirFEEL convergence analysis}
Under Assumptions~\ref{assumption: smoothness}, \ref{assumption: variance bound} and~\ref{assumption: PL inequality}, suppose a fixed learning rate of $\gamma \leq \frac{2}{2+ L_{\max}} \leq \frac{1}{\delta}$ and a mini-batch size of $m^{(b)}_k = K$. Then, for~$N$ communication rounds, the expected optimality gap satisfies
\begin{multline}
\EX\left[F\left(\bw^{[N+1]}\right)\right]-F^{*}  \leq
C^{N}
    \left(\EX\left[F\left(\bw^{[1]}\right)\right]-F^{*}\right)
\\
 + \underbrace{\sum_{n=1}^{N}\frac{C^{N-n}}{2}\gamma^{2}L_{\max}\,\MSEOTA(\bm{p}, \alpha)}_{\Delta\text{: optimality gap }}.
\end{multline}
with $C=1-\delta\gamma\in(0,1)$.
\end{theorem}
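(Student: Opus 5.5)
The plan is to prove a one-round descent inequality and then unroll it over $N$ rounds. In round $t$ the update is $\bw^{[t]}=\bw^{[t-1]}+\gamma\estimG^{[t]}$, where the update direction is the estimated (descent-signed) gradient. Write $\estimG=\bg+\varepsilon$, with $\bg$ the average of the local stochastic gradients. Apply Assumption~\ref{assumption: smoothness} with $\bw_2-\bw_1=-\gamma\estimG$ (taking the descent sign convention). This gives
\begin{equation*}
F(\bw^{[t]})\le F(\bw^{[t-1]})-\gamma\nabla F^{\mathsf T}\estimG+\tfrac{\gamma^2}{2}\sum_i L_i\tilde g_i^2 .
\end{equation*}
We then take the conditional expectation given $\bw^{[t-1]}$. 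Assumption~\ref{assumption: variance bound} gives $\EX[\bg]=\nabla F$. The AirComp error $\varepsilon$ has zero mean, since the normalized gradients $\overline{\bg}_k$ and the noise $\bn$ are zero-mean and independent of the model. Hence the linear term becomes $-\gamma\|\nabla F\|^2$.

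Next we bound the quadratic term by $L_{\max}\EX\|\estimG\|^2$. The cross terms vanish, so
\begin{equation*}
\EX\|\estimG\|^2=\|\nabla F\|^2+\EX\|\bg-\nabla F\|^2+\MSEOTA(\bm p,\alpha).
\end{equation*}
The middle term is at most $\sum_i\sigma_i^2/(K\cdot m_k^{(b)})=\|\bm\sigma\|^2/K^2$ when $m_k^{(b)}=K$. This uses independence across devices, so averaging $K$ local gradients each with variance $\sigma_i^2/K$ gives $\sigma_i^2/K^2$. The result is
\begin{equation*}
\EX F(\bw^{[t]})\le F(\bw^{[t-1]})-\gamma\big(1-\tfrac{\gamma L_{\max}}{2}\big)\|\nabla F\|^2+\tfrac{\gamma^2L_{\max}}{2}\big(\tfrac{\|\bm\sigma\|^2}{K^2}+\MSEOTA\big).
\end{equation*}

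The step-size condition $\gamma\le 2/(2+L_{\max})$ is where I expect the bookkeeping to be delicate. To obtain exactly $C=1-\delta\gamma$, we need the coefficient of $-\|\nabla F\|^2$ to be at least $\gamma/2$. This holds if $\gamma L_{\max}\le 1$, which follows from $\gamma\le 2/(2+L_{\max})$ when $L_{\max}\ge 2$. Otherwise I would follow \cite{Cao_2022_OptimalPowerControlOTAFL} and absorb the stochastic-variance term into the gradient term. Concretely, the extra "$2$" in the denominator is what lets $\|\bm\sigma\|^2/K^2$ be dominated, and this yields exactly the stated bound containing only $\MSEOTA$. If that absorption does not go through cleanly, the fallback is to carry an extra additive variance term, which is independent of $(\bm p,\alpha)$ and therefore does not matter for the subsequent optimization.

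Finally, we apply Assumption~\ref{assumption: PL inequality}: $-\tfrac{\gamma}{2}\|\nabla F\|^2\le-\delta\gamma(F-F^*)$. Subtracting $F^*$ and taking the total expectation gives
\begin{equation*}
\EX F(\bw^{[t]})-F^*\le C\big(\EX F(\bw^{[t-1]})-F^*\big)+\tfrac{\gamma^2L_{\max}}{2}\MSEOTA .
\end{equation*}
Here $C\in(0,1)$ because $\gamma\le 1/\delta$. Iterating this recursion from $t=1$ to $N$ (index shifted to match $\bw^{[N+1]}$) produces the geometric sum $\sum_{n=1}^N C^{N-n}\tfrac{\gamma^2L_{\max}}{2}\MSEOTA$, which is the claimed bound.
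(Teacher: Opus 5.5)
Your plan re-derives the one-round recursion from first principles. The paper does not do this. It imports \cite[Eq.~(16)]{Cao_2022_OptimalPowerControlOTAFL}, where the round-$n$ error enters only through $\|\EX[\varepsilon^{[n]}]\|^2$ and $\gamma^2L_{\max}\EX\|\varepsilon^{[n]}\|^2$. It then drops the first term because $\EX[\overline{\bg}_k]=\bm 0$ and $\EX[\bn]=\bm 0$, and reads the second as $\MSEOTA(\bm p,\alpha)$. Your derivation does not close to the stated inequality, and it fails at two concrete points.

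First, the step-size bookkeeping is reversed. The condition $\gamma\le 2/(2+L_{\max})$ gives $\gamma L_{\max}\le 2L_{\max}/(2+L_{\max})$, and this is at most $1$ exactly when $L_{\max}\le 2$, not when $L_{\max}\ge 2$. For $L_{\max}>2$, take $L_{\max}=10$ and $\gamma=1/6$. Your coefficient $\gamma(1-\gamma L_{\max}/2)$ is then $\gamma/6<\gamma/2$. The PL step therefore only gives the factor $1-2\delta\gamma(1-\gamma L_{\max}/2)=1-\delta\gamma/3$ instead of $C=1-\delta\gamma$. The ``otherwise'' branch of your plan does not address this coefficient at all.

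Second, and more fundamentally, the term $\frac{\gamma^2L_{\max}}{2}\|\bm\sigma\|^2/K^2$ cannot be absorbed into $-\|\nabla F\|^2$. It is a constant independent of $\bw$, while $\|\nabla F(\bw)\|^2$ can be zero, and the step-size condition does not involve $\bm\sigma$ at all. For a concrete case, take $F(\bw)=\frac12\|\bw\|^2$ (so $L_{\max}=\delta=1$), $\bw^{[1]}=\bw^*$ and $N=1$. Then $\EX[F(\bw^{[2]})]-F^*=\frac{\gamma^2}{2}\big(\EX\|\bg-\nabla F\|^2+\MSEOTA\big)$, using your own cross-term convention. This exceeds the claimed right-hand side $\frac{\gamma^2}{2}\MSEOTA$ whenever the stochastic-gradient variance is positive.

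So, with $\varepsilon=\estimG-\bg$, no argument from the three assumptions alone yields a bound containing only $\MSEOTA$. The stated form is reached only through the paper's route: take the cited recursion as given and identify its second-moment term with $\MSEOTA$. Your fallback, which carries an extra $(\bm p,\alpha)$-independent variance term, is what a self-contained argument actually proves. It is enough for the later equivalence $\min_{\bm p,\alpha}\Delta\iff\min_{\bm p,\alpha}\MSEOTA$, but it is not the inequality stated.

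Separately, $\EX[(\bg-\nabla F)^{\mathsf T}\varepsilon]=0$ is not automatic. The error $\varepsilon$ is a combination of the $\overline{\bg}_k$, which are affine in the $\bg_k$. The identity holds only under the paper's modelling convention for $\overline{\bg}_k$, or under perfect alignment $h_k\sqrt{\alpha p_k}=1$.
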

\begin{proof}
Under the same assumptions, the optimality gap satisfies~\cite[Eq. (16)]{Cao_2022_OptimalPowerControlOTAFL}
    \begin{multline}
\EX\left[F\left(\bw^{[N+1]}\right)\right]-F^{*}  \leq
C^{N}
    \left(\EX\left[F\left(\bw^{[1]}\right)\right]-F^{*}\right)
\\
+\sum_{n=1}^{N}\frac{C^{N-n}}{2}\Bigg(
\Big\|\EX\left[\varepsilon^{[n]}\right]\Big\|^{2}
+\gamma^{2}L_{\max}\EX\left[\left\|\varepsilon^{[n]}\right\|^{2}\right]
\Bigg).
\end{multline}

The analysis in~\cite{Cao_2022_OptimalPowerControlOTAFL} was conducted without gradient normalization.
In our case, since we transmit the normalized gradient~$\overline{\bg}_k$, the bias term vanishes, i.e., $\EX[\varepsilon^{[n]}]=\bm{0}$, because $\EX[\overline{\bg}_k]=\bm{0}$ and $\EX[\bn]=\bm{0}$.
Moreover, $\EX\left[\left\|\varepsilon^{[n]}\right\|^{2}\right]$ corresponds to~$\MSEOTA(\bm{p}, \alpha)$, which concludes the proof.
\end{proof}

\section{Problem Formulation and Optimal Power Control Design}

\subsection{Problem Formulation}

All \acp{ED} in $\setK$ participate in both AirComp aggregation and coordinated sensing. The physical-layer design problem is to optimize the transmit powers and the receive scaling factor so as to minimize the optimality gap~$\Delta$ in \Cref{th: AirFEEL convergence analysis} while satisfying reliability constraints. The only term in the optimality gap that depends on the power allocation and receive scaling factor is $\MSEOTA(\bm{p}, \alpha)$; therefore:
\begin{equation}
    \min_{\bp, \alpha} \Delta \iff \min_{\bp, \alpha} \MSEOTA(\bm{p}, \alpha).
\end{equation}

The problem becomes minimizing the MSE of AirComp aggregation under the sensing constraint~\eqref{eq: sensing_cstr}.
This coupling is non-trivial because communication and sensing favor different system parameters. In particular, an \ac{ED} with a large sensing coefficient~$b_k$ may need a high transmit power to satisfy the joint detection constraint, whereas AirComp prefers aligned received power across the devices in~$\setK$.

Accordingly, we formulate the following joint power-control and receive-scaling problem:
\begin{subequations}
\begin{align}
\mathrm{(P0):}\min_{\{p_k\}_{k\in\setK}, \alpha>0} \quad & \sum_{k\in\setK}(h_k\sqrt{p_k\alpha} - 1)^2 + \alpha\sigma_n^2 \label{pb: init}\\
\mathrm{s.t.} \quad\quad
& \sum_{k\in\setK} p_kb_k \geq \eta_D, \label{cons: sensing} \\
& 0 \leq p_k \leq P_{\max}, \  \forall k \in \setK. \label{cons: power}
\\[-.9cm]
&\notag
\end{align}
\end{subequations}

\subsection{Optimization Problem Solution Existence and Convex Transformation}
\label{section: optimal power control given client selection}

The joint power-control and receive-scaling problem can be written as:
\begin{subequations}
\label{pb: given_client_selection}
\begin{align}
\mathrm{(P1)}\!:\hspace{-.4cm}\min_{ \{p_k\}_{k\in\setK},\alpha>0}  &
\alpha(\sigma_n^2+\sum_{k\in\setK}h_k^2 p_k) - 2\sum_{k\in\setK}h_k\sqrt{\alpha p_k}\\
\mathrm{s.t.} \quad\quad & \eqref{cons: sensing}, \eqref{cons: power} \notag
\end{align}
\end{subequations}

\begin{proposition}(Feasibility Condition)
\label{theorem: feasibility condition}
    The problem~$\mathrm{(P1)}$ is feasible if and only if 
\begin{equation}
    P_{\max} \sum_{k\in\setK}b_k \geq \eta_D,
    \label{eq: feasible condition}
\end{equation}
and strictly feasible if 
\begin{equation}
    P_{\max} \sum_{k\in\setK}b_k > \eta_D.
    \label{eq: strict feasible condition}
\end{equation}
\end{proposition}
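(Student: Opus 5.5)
The feasible set of $\mathrm{(P1)}$ involves only the constraints \eqref{cons: sensing} and \eqref{cons: power}. The scaling factor $\alpha>0$ appears only in the objective and can be set to any positive value, for instance $\alpha=1$. The plan is therefore to study the set $\mathcal{P}\defeq\{\bp\in[0,P_{\max}]^K : \sum_{k\in\setK}p_kb_k\geq\eta_D\}$, using the fact that the coefficients $b_k=\zeta^2N_r\beta_{k,k}$ are nonnegative. In the model they are in fact strictly positive, since the path losses are positive.

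For the ``if'' direction of \eqref{eq: feasible condition}, take the candidate point $\bp=P_{\max}\bm{1}$. It satisfies \eqref{cons: power} trivially, and its sensing energy is $P_{\max}\sum_k b_k\geq\eta_D$, so $(P_{\max}\bm{1},1)$ is feasible.

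For the ``only if'' direction, take any feasible $\bp$. Because $b_k\geq0$ and $p_k\leq P_{\max}$, we have $\eta_D\leq\sum_k p_kb_k\leq P_{\max}\sum_k b_k$, which is exactly \eqref{eq: feasible condition}. The maximum of the linear function $\sum_k p_kb_k$ over the box is attained at the all-$P_{\max}$ vertex. This is the whole content of the equivalence.

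For strict feasibility under \eqref{eq: strict feasible condition}, I will exhibit a Slater point: a $\bp$ with $0<p_k<P_{\max}$ for all $k$ and $\sum_k p_kb_k>\eta_D$. I take $\bp=(1-\epsilon)P_{\max}\bm{1}$ with $\epsilon\in(0,1)$ small. Its sensing energy is $(1-\epsilon)P_{\max}\sum_kb_k$, which exceeds $\eta_D$ whenever
\begin{equation*}
\epsilon<1-\frac{\eta_D}{P_{\max}\sum_kb_k}.
\end{equation*}
The right-hand side is positive by \eqref{eq: strict feasible condition}, so such an $\epsilon$ exists, and all box constraints then hold strictly. The only subtlety is what ``strictly feasible'' should mean. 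If the subsequent convex reformulation treats the lower bounds $p_k\geq0$ as affine constraints, Slater's condition only needs strictness of the non-affine ones. Either way, the interior point above covers all interpretations, so I expect no real obstacle; the main care point is recording $b_k\geq0$ explicitly.
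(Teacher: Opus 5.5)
Your argument is correct and matches the reasoning the paper leaves implicit. The paper states this proposition without proof, relying on exactly your observation that $\sum_{k\in\setK} p_k b_k$ is maximized over the box at $p_k=P_{\max}$ for all $k$ (the same fact it uses when noting that equality in \eqref{eq: feasible condition} forces $p_k^*=P_{\max}$). Your scaled point $(1-\epsilon)P_{\max}\bm{1}$ (with $P_{\max}>0$) makes the strict-feasibility/Slater claim explicit, and it also gives $x_k>0$ in $\mathrm{(P2)}$, as the relative-interior form of Slater's condition requires.
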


Since $\alpha>0$ is defined on a non-closed set, one must first establish the existence of an optimum for this optimization problem.
\begin{theorem}[Existence of minimum of $\mathrm{(P1)}$]
\label{theorem: solution existence}
When~\eqref{eq: feasible condition} holds, the optimum of the problem~$\mathrm{(P1)}$ exists.
\end{theorem}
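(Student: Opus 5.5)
Write the objective of $\mathrm{(P1)}$ as
\[
f(\bp,\alpha)=\alpha\Big(\sigma_n^2+\sum_{k\in\setK}h_k^2p_k\Big)-2\sqrt{\alpha}\sum_{k\in\setK}h_k\sqrt{p_k}.
\]
The plan is to reduce the problem to minimizing a continuous function over a compact set, so that the Weierstrass theorem applies. Let $\mathcal{P}$ denote the set of $\bp$ satisfying \eqref{cons: sensing} and \eqref{cons: power}. This set is a closed and bounded polyhedron in $[0,P_{\max}]^K$, hence compact, and it is nonempty by Proposition~\ref{theorem: feasibility condition} under \eqref{eq: feasible condition}. The only difficulty is the variable $\alpha$, which lives on the open, unbounded set $(0,\infty)$. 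I therefore need to show two things: the infimum is not approached as $\alpha\to\infty$, and it is not approached as $\alpha\to 0^+$.

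\textbf{Large $\alpha$.} Set $A(\bp)=\sigma_n^2+\sum_k h_k^2p_k\ge\sigma_n^2>0$ and $B(\bp)=\sum_k h_k\sqrt{p_k}\le B_{\max}:=\sqrt{P_{\max}}\sum_k h_k$. Then
\[
f\ge \sigma_n^2\alpha-2B_{\max}\sqrt{\alpha}.
\]
This lower bound tends to $+\infty$ as $\alpha\to\infty$, uniformly over $\mathcal{P}$. Hence there is an $\bar\alpha$ such that $f(\bp,\alpha)>0$ for all $\alpha>\bar\alpha$ and all $\bp\in\mathcal{P}$.

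\textbf{Small $\alpha$.} As $\alpha\to0^+$ we have $f\to0$. So I need a feasible point with strictly negative objective. Take any $\bp_0\in\mathcal{P}$. Because $\sum_k b_kp_{0,k}\ge\eta_D>0$, at least one $p_{0,k}$ is positive (assuming $\eta_D>0$ and $h_k>0$). Thus $B(\bp_0)>0$, and for small $\alpha$ the term $-2B\sqrt{\alpha}$ dominates, so $f(\bp_0,\alpha_0)<0$ for some $\alpha_0$. More concretely, minimizing over $\alpha$ for fixed $\bp$ gives $\sqrt{\alpha}=B/A$ and value $-B^2/A<0$. Let $v_0:=f(\bp_0,\alpha_0)<0$.

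Using continuity, I then choose $\underline\alpha>0$ small enough that
\[
|f|\le 2B_{\max}\sqrt{\alpha}+A_{\max}\alpha<|v_0|/2 \quad\text{for all }\alpha\le\underline\alpha .
\]
Any point with $\alpha\notin[\underline\alpha,\bar\alpha]$ then has objective greater than $v_0$. The minimization can therefore be restricted to the compact set $\mathcal{P}\times[\underline\alpha,\bar\alpha]$, which contains $(\bp_0,\alpha_0)$ (taking $\underline\alpha\le\alpha_0\le\bar\alpha$). On this set $f$ is continuous, since $\sqrt{\cdot}$ is continuous on $[0,\infty)$. Weierstrass gives a minimizer, and it is also a global minimizer of $\mathrm{(P1)}$.

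The main obstacle is the $\alpha\to0^+$ boundary, because the infimum could in principle be the limiting value $0$ and never be attained. This is exactly why the strictly negative feasible value is needed. It relies on $\eta_D>0$ (or simply choosing some $\bp$ with a positive entry, which is feasible when $\eta_D\le0$ too) and on positive channel gains. I would state these as standing assumptions of the model.
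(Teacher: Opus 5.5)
Your proof is correct and follows essentially the same route as the paper. Both arguments bound $\alpha$ from above because the objective becomes positive for large $\alpha$. Both exhibit a feasible point with strictly negative objective, so the infimum is not approached as $\alpha\to 0^+$; the paper takes $p_k=P_{\max}$ for all $k$ with small $\alpha$, which removes any need for $\eta_D>0$. Both then apply Weierstrass on a compact set.

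The differences are cosmetic. The paper gets the large-$\alpha$ bound $\alpha\sigma_n^2-|\setK|$ via zero-forcing (completing the square), whereas you use $\sigma_n^2\alpha-2B_{\max}\sqrt{\alpha}$. The paper also closes the domain to $\alpha\ge 0$ and then rules out $\alpha=0$ (objective value $0$), instead of cutting off a small interval $[0,\underline\alpha)$ as you do.
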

\begin{proof}
See Appendix~\ref{proof: theorem solution existence}.
\end{proof}

We assume \emph{strict feasibility} throughout the rest of this work, since the optimization under equality in~\eqref{eq: feasible condition} is trivial. In fact, if $P_{\max} \sum_{k\in\setK}b_k=  \eta_D$, then clearly $p_k^* = P_{\max}$, and the optimum $\alpha^*$ can be easily derived by zeroing the first derivative of the objective with respect to $\alpha$.
\begin{assumption}
    \label{assumption: strict feasible}
The strict feasibility condition~\eqref{eq: strict feasible condition} holds, i.e., there exists a strictly feasible point for the problem~$\mathrm{(P1)}$.
\end{assumption}

Problem~$\mathrm{(P1)}$ is non-convex due to the product coupling of $p_k$ and $\alpha$ in the objective.
By introducing a bijective change of variables: 
\begin{equation}
    \begin{cases}
        \R_{\geq 0}^K\times\R_{>0} &\rightarrow \R_{\geq 0}^K\times\R_{>0}\\
            ((p_k)_k, \alpha) &\mapsto ( (p_k\alpha)_k, \alpha) = ((x_k)_k, \alpha).
    \end{cases}
    \label{eq: variable transformation pk xk}
\end{equation}
 $(\mathrm{P1})$ is equivalent to the following optimization problem
 for $k\in\setK$:
\begin{subequations}
\label{pb: var_transformed}
\begin{align}
\mathrm{(P2)}:\hspace{-.4cm}\min_{ \{x_k\}_{k\in\setK},\alpha>0}  &
\sum_{k\in\setK}h_k^2x_k - 2\sum_{k\in\setK}h_k \sqrt{x_k} + \alpha\sigma_n^2\\
\mathrm{s.t.} \quad\quad &  \sum_{k\in\setK} b_kx_k \geq \eta_D \alpha  \label{cons (pb xk): sensing} \\ 
& 0 \leq x_k \leq  P_{\max} \alpha, \  \forall k \in \setK.\label{cons (pb xk): power} 
\end{align}
\end{subequations}

The resulting optimization problem is \emph{convex} since the objective function is convex and the constraints are affine.
Although the problem can be solved using standard convex solvers, we conduct an in-depth theoretical analysis to gain deeper insights into joint power allocation and scaling-factor design, thereby developing an efficient and robust optimal algorithm. Under Assumption~\ref{assumption: strict feasible}, Slater's condition holds and the KKT conditions are necessary and sufficient for optimality. We analyze the KKT conditions in the following to derive the optimal solution structure and develop an efficient algorithm.

We first establish a strict positivity property of the optimal $x_k^*$ for $k\in\setK$.
\begin{proposition}
\label{lemma: p positive}
Every optimal solution satisfies 
$x_k^*>0$ for all $k\in\setK$.
\end{proposition}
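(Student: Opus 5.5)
We argue by contradiction with a perturbation: starting from any optimal point of $\mathrm{(P2)}$ with some coordinate equal to zero, we build a feasible point with strictly smaller objective. The driving fact is that $\sqrt{x_k}$ has infinite slope at $0$. The term $-2h_k\sqrt{x_k}$ therefore falls faster than linearly as $x_k$ leaves zero, while every constraint in $\mathrm{(P2)}$ is affine.

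Let $(\bx^*,\alpha^*)$ be optimal, which exists by Theorem~\ref{theorem: solution existence} through the bijection~\eqref{eq: variable transformation pk xk}. Suppose $x_j^*=0$ for some $j\in\setK$, and set $x_j=\epsilon$ with $\epsilon>0$, leaving all other variables unchanged.

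\textbf{Feasibility.} Constraint~\eqref{cons (pb xk): sensing} still holds, because $b_j\ge 0$ means the left-hand side can only grow. Constraint~\eqref{cons (pb xk): power} requires $\epsilon\le P_{\max}\alpha^*$, and this is satisfiable since $\alpha^*>0$ by definition of the domain. So every $\epsilon\in(0,P_{\max}\alpha^*]$ gives a feasible point.

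\textbf{Objective change.} The objective changes by
\begin{equation*}
h_j^2\epsilon-2h_j\sqrt{\epsilon}=\sqrt{\epsilon}\,\bigl(h_j^2\sqrt{\epsilon}-2h_j\bigr).
\end{equation*}
Assuming $h_j>0$, this is strictly negative whenever $\sqrt{\epsilon}<2/h_j$. Choosing $\epsilon<\min\{P_{\max}\alpha^*,\,4/h_j^2\}$ therefore yields a feasible point with strictly lower objective, contradicting optimality. Hence $x_j^*>0$ for every $j$.

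\textbf{Main obstacle.} The argument rests on $h_k>0$. Here $h_k$ is used as a real channel gain, implicitly the magnitude after phase compensation. If some $h_k=0$, that device's term in the objective is constant, and the claim can fail whenever the sensing constraint does not bind. I would therefore make $h_k>0$ explicit, as is standard for AirComp power control.

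A second point is whether the paper intends a KKT-style proof instead. Differentiating $-2h_k\sqrt{x_k}$ at $0$ gives $-\infty$, so stationarity cannot hold at $x_k=0$ with finite multipliers. The direct perturbation above sidesteps this and avoids any appeal to Slater's condition, so I would present it as the main argument.
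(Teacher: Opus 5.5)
Your proof is correct and is essentially the paper's argument. The paper also assumes an optimum with $x_{k_0}^*=0$ and replaces that coordinate by the specific value $\min\{1/h_{k_0}^2, P_{\max}\alpha^*\}<4/h_{k_0}^2$, where you use any $\epsilon<\min\{P_{\max}\alpha^*,4/h_j^2\}$. It asserts feasibility and then contradicts optimality via $h_{k_0}^2\overline{x}_{k_0}-2h_{k_0}\sqrt{\overline{x}_{k_0}}<0$. Your explicit use of $b_j\ge 0$ for feasibility and your flagging of the implicit assumption $h_k>0$ fill in details the paper leaves unstated.
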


\begin{proof}
In Appendix~\ref{proof: lemma p positive}.
\end{proof}

\begin{lemma}[Special case without sensing]
\label{lemma: res_lambda_zero}
For $\mathrm{(P2)}$,
 if the constraint~\eqref{cons (pb xk): sensing} is relaxed, the solution is as follows. Assume $h_1\geq \cdots \geq h_{K'}$. There exists $\ell\in\{1,\ldots, K\}$ such that
$\forall k\leq \ell-1, \ x_k^{(\ell)} = \frac{1}{h_k^2}\leq\alpha^{(\ell)}P_{\max}$, and $\forall k\geq \ell, \ x_k^{(\ell)} = \alpha^{(\ell)} P_{\max} \leq \frac{1}{h_k^2}$ with
\begin{equation}
\alpha^{(\ell)} = \frac{1}{P_{\max}}\Big(\frac{\sum\limits_{k\geq\ell} h_k }{\frac{\sigma_n^2}{P_{\max}}+\sum\limits_{k\geq\ell} h_k^2}\Big)^2.
\end{equation}
Such $(x_k^{(\ell)}, \alpha^{(\ell)})$ is feasible and optimal for $\mathrm{(P2)}$ without the sensing constraint. 
\end{lemma}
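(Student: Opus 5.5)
Without the sensing constraint, (P2) separates in the $x_k$ once $\alpha$ is fixed: minimize $h_k^2x_k-2h_k\sqrt{x_k}$ over $0\le x_k\le P_{\max}\alpha$. Each term is convex in $x_k$ with unconstrained minimizer $x_k=1/h_k^2$. Hence for fixed $\alpha$ the optimum is $x_k(\alpha)=\min\{1/h_k^2,\alpha P_{\max}\}$. I would first establish this closed form, and then substitute it to obtain a one-dimensional problem in $\alpha$:
\begin{equation*}
G(\alpha)=\sum_{k\in\setK}\phi_k(\alpha)+\alpha\sigma_n^2,
\end{equation*}
where $\phi_k(\alpha)=-1$ if $\alpha P_{\max}h_k^2\ge 1$, and $\phi_k(\alpha)=h_k^2P_{\max}\alpha-2h_k\sqrt{P_{\max}\alpha}$ otherwise.

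Since partial minimization of a jointly convex function over a convex set preserves convexity, $G$ is convex on $(0,\infty)$. Each $\phi_k$ is also $C^1$: at the breakpoint $\alpha_k\defeq 1/(P_{\max}h_k^2)$ both branches equal $-1$, and the derivative $h_k^2P_{\max}-h_kP_{\max}^{1/2}\alpha^{-1/2}$ vanishes there. With $h_1\ge\cdots\ge h_K$, the breakpoints are ordered $\alpha_1\le\cdots\le\alpha_K$. On the interval $[\alpha_{\ell-1},\alpha_\ell]$ (with $\alpha_0\defeq 0$), exactly the indices $k\le \ell-1$ are saturated at $1/h_k^2$, and the indices $k\ge\ell$ sit at $\alpha P_{\max}$.

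On that interval, the first-order condition reads
\begin{equation*}
P_{\max}\sum_{k\ge\ell}h_k^2+\sigma_n^2=\sqrt{P_{\max}/\alpha}\sum_{k\ge\ell}h_k .
\end{equation*}
Solving it gives exactly the stated $\alpha^{(\ell)}$. It remains to show that a stationary point exists and lies in the right interval, which is the main step. I would argue via the derivative
\begin{equation*}
G'(\alpha)=\sigma_n^2+\sum_{k:\alpha<\alpha_k}\bigl(h_k^2P_{\max}-h_k\sqrt{P_{\max}/\alpha}\bigr),
\end{equation*}
which is continuous and nondecreasing. As $\alpha\to0^+$ it tends to $-\infty$. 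For $\alpha\ge\alpha_K$ it equals $\sigma_n^2>0$. Hence $G'$ has a zero $\alpha^*\in(0,\alpha_K)$, and if $\sigma_n^2>0$ then $G$ is strictly increasing past $\alpha_K$. Take $\ell$ with $\alpha^*\in[\alpha_{\ell-1},\alpha_\ell]$; there $G'$ has the explicit form above, so $\alpha^*=\alpha^{(\ell)}$.

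The membership $\alpha_{\ell-1}\le\alpha^{(\ell)}\le\alpha_\ell$ translates directly into the stated inequalities:
\begin{itemize}
\item for $k\le\ell-1$: $1/h_k^2\le\alpha^{(\ell)}P_{\max}$;
\item for $k\ge\ell$: $\alpha^{(\ell)}P_{\max}\le 1/h_k^2$.
\end{itemize}
Feasibility of $(x^{(\ell)},\alpha^{(\ell)})$ for the power constraints is immediate. Optimality follows from the convexity of $G$, since a zero of $G'$ is a global minimizer, together with the inner minimization. Alternatively, the KKT conditions of (P2) without \eqref{cons (pb xk): sensing} can be verified directly: the multipliers are $\mu_k=h_k(1/\sqrt{x_k}-h_k)\ge0$ on the active upper bounds $k\ge\ell$, and stationarity in $\alpha$ is exactly the equation defining $\alpha^{(\ell)}$. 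The main subtlety is the derivative continuity at the breakpoints, which guarantees that the zero of $G'$ falls in the interior of an interval or at a breakpoint consistently. Ties $h_k=h_{k+1}$ are handled because the breakpoints then coincide.
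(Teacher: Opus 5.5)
Your proof is correct and complete. The inner minimizer $x_k(\alpha)=\min\{1/h_k^2,\alpha P_{\max}\}$, the convexity and $C^1$ regularity of $G$, the limits $G'(0^+)=-\infty$ and $G'(\alpha)=\sigma_n^2>0$ for $\alpha\ge\alpha_K$, and the identification of the root on $[\alpha_{\ell-1},\alpha_\ell]$ with $\alpha^{(\ell)}$ all check out. The breakpoint ordering then yields exactly the two families of inequalities in the statement.

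The paper gives no proof of its own here; it defers to the AirComp power-control literature. The closest in-paper argument is its KKT analysis specialized to $\lambda=0$. There the closed form of $x_k^*$ reduces to $\min\{1/h_k^2,\alpha^*P_{\max}\}$. The multipliers derived in Appendix~C become $\rho_k=\bigl(h_k/\sqrt{P_{\max}\alpha}-h_k^2\bigr)^+$, which is your $\mu_k$. The $\alpha$-stationarity condition $\sigma_n^2=P_{\max}\sum_k\rho_k$ is precisely your equation $G'(\alpha)=0$, and the paper remarks that $\lambda^*=0$ recovers $\alpha^{(\ell)}$.

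The real difference is how a valid $\ell$ is shown to exist. On the KKT route you need an optimum to exist (the paper's existence theorem) and Slater's condition for necessity. The index $\ell$ then appears only a posteriori, as whichever devices happen to be at full power. Your reduction to the one-dimensional convex $G$ instead produces $\ell$ constructively via the intermediate value theorem. It gives global optimality directly from convexity, with no existence theorem needed. Since $G'$ is strictly increasing on $(0,\alpha_K)$, it also shows the optimal $\alpha$ is unique and can be found by scanning the $K$ breakpoint intervals.

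That same step is what makes your closing KKT verification rigorous. It guarantees the candidate actually satisfies $\mu_k\ge 0$ for $k\ge\ell$ and $1/h_k^2\le\alpha^{(\ell)}P_{\max}$ for $k\le\ell-1$.
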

This optimal solution form (without sensing) is common in the literature, e.g. \cite[Theorem~1]{Zhang_Tao_2021_GradientStatPowerContFLOTA} without considering the cross-product term in the MSE expression and \cite[Theorem~1]{CaoXiaowen_OTA_PowerControl_2020}.

\subsection{Theoretical Analysis of $\mathrm{(P2)}$}

Let $\lambda, \xi_k, \rho_k, \nu\geq 0$ denote the dual variables. The Lagrangian function is written as:
\begin{multline}
\hspace{-.2cm}
L(\{x_k\}_k,\alpha, \lambda,\{\xi_k\}_k, \{\rho_k\}_k,\nu) \!=\! \sum_{k\in\setK}\!h_k^2x_k -2\! \sum_{k\in\setK}\! h_k \sqrt{x_k}   \\
+\alpha\sigma_n^2 
+ \lambda(\eta_D\alpha - \sum_{k\in\setK} b_kx_k) - \sum_{k\in\setK}\xi_kx_k  \\
+ \sum_{k\in\setK}\rho_k(x_k - P_{\max}\alpha) -\nu\alpha.
\end{multline}

The KKT conditions can be written as
\begin{subnumcases}{\phantom }
\text{Primal constraints:~\eqref{cons (pb xk): sensing},\eqref{cons (pb xk): power}, }\alpha^*>0  \notag \\[-.1cm]
\text{D: } \lambda^*,\xi_k^*,\rho_k^*,\nu^* \geq 0 \notag\\
\text{C.S.: }\lambda^*(\eta_D\alpha^* - \sum_{k\in\setK} b_kx_k^*) =0, \notag\\[-.15cm]
\quad\quad\xi_k^*x_k^*=0, \rho_k^*(x_k^* - P_{\max}\alpha^*)=0,
\nu^*\alpha^* = 0 \notag\\
(\forall k\in\setK)\ \nabla_{x_k}\! L\!=\! h_k^2 \!-\! \frac{h_k }{\sqrt{x_k^*}} \!-\! \lambda^* b_k \!-\! \xi_k^* \!+\! \rho_k^* \!=\! 0, \label{KKT: grad_xk_Sc}\\[-.1cm]
\nabla_{\alpha}L=\sigma_n^2+\lambda^* \eta_D - P_{\max}\sum_{k\in\setK}\rho_k^* - \nu^* = 0.\label{KKT: grad_alpha}
\end{subnumcases}

\begin{proposition}
\label{theorem: closed form solution of x_k}
Under \Cref{assumption: strict feasible}, the optimum of $\mathrm{(P2)}$, i.e., $(x_k^*, \alpha^*)$ with $\lambda^*\geq 0$ (the optimal dual variable of the sensing constraint), satisfies the following conditions:
For $k\in\setK$:
\begin{equation}
x_k^*\! \!=\!x_k^*(\lambda^*\!\!,\alpha^*) \!\defeq\! \Indic_{\! \{\! \lambda^*\!<\!\iota_k\!(\alpha^*\!)\}}\! \frac{h_k^2}{(h_k^2\!-\!\lambda^* b_k)^2}
 +\! \Indic_{\! \{\lambda^*\! \geq\! \iota_k(\alpha^*)\}} \alpha^{\! *}\! P_{\max},
 \label{eq: xk Sc expression}
\end{equation}
where $\iota_k(\alpha^*) \defeq \frac{h_k^2}{b_k}\big(1- \frac{1}{h_k\sqrt{\alpha^* P_{\max}}}\big)$; $\Indic_{E}$ is the indicator function of set $E$.

\end{proposition}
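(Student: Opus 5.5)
The proof works directly from the KKT system. Under \Cref{assumption: strict feasible}, Slater's condition holds, so any optimum $(x_k^*,\alpha^*)$ of $\mathrm{(P2)}$ admits multipliers $\lambda^*,\xi_k^*,\rho_k^*,\nu^*\ge 0$ satisfying the KKT conditions. By \Cref{lemma: p positive}, $x_k^*>0$ for every $k$. Complementary slackness $\xi_k^*x_k^*=0$ then gives $\xi_k^*=0$. Likewise $\alpha^*>0$ forces $\nu^*=0$, although this is only needed for \eqref{KKT: grad_alpha}. The stationarity condition \eqref{KKT: grad_xk_Sc} thus reduces, for each $k$, to
\begin{equation*}
\frac{h_k}{\sqrt{x_k^*}} = h_k^2-\lambda^* b_k+\rho_k^*, \qquad \rho_k^*\ge 0,\quad \rho_k^*\,(x_k^*-P_{\max}\alpha^*)=0 .
\end{equation*}
This is a scalar problem for each $k$, and I split it according to whether the upper-bound constraint is active.

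\emph{Case $\rho_k^*=0$.} Here $h_k/\sqrt{x_k^*}=h_k^2-\lambda^*b_k$. Since the left side is strictly positive, we need $h_k^2-\lambda^*b_k>0$, and then $x_k^*=h_k^2/(h_k^2-\lambda^*b_k)^2$. Primal feasibility $x_k^*\le P_{\max}\alpha^*$ means $\sqrt{x_k^*}=h_k/(h_k^2-\lambda^*b_k)\le\sqrt{\alpha^*P_{\max}}$. Rearranged, this is $h_k^2-\lambda^*b_k\ge h_k/\sqrt{\alpha^*P_{\max}}$, i.e. $\lambda^*\le \iota_k(\alpha^*)$. This inequality also implies the required positivity, because $\iota_k(\alpha^*)<h_k^2/b_k$.

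\emph{Case $\rho_k^*>0$.} Complementary slackness forces $x_k^*=\alpha^*P_{\max}$. Substituting into stationarity gives $h_k/\sqrt{\alpha^*P_{\max}}=h_k^2-\lambda^*b_k+\rho_k^*>h_k^2-\lambda^*b_k$, which is exactly $\lambda^*>\iota_k(\alpha^*)$.

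To assemble the formula, note that $\iota_k$ is the threshold separating the two regimes. If $\lambda^*<\iota_k(\alpha^*)$, the second case is impossible, since it would require $\lambda^*>\iota_k$; so the first case holds and gives the interior expression. If $\lambda^*>\iota_k(\alpha^*)$, the first case is impossible, so $x_k^*=\alpha^*P_{\max}$. At the boundary $\lambda^*=\iota_k(\alpha^*)$, either case yields $x_k^*=\alpha^*P_{\max}$: in the first case, $h_k^2-\lambda^*b_k=h_k/\sqrt{\alpha^*P_{\max}}$ gives $h_k^2/(h_k^2-\lambda^*b_k)^2=\alpha^*P_{\max}$. So assigning the tie to the second indicator is consistent, and the formula \eqref{eq: xk Sc expression} holds for all $k$.

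The only delicate step is this boundary and sign bookkeeping. One must use \Cref{lemma: p positive} to rule out $x_k^*=0$, which would make $h_k/\sqrt{x_k}$ undefined, and check that $h_k^2-\lambda^*b_k>0$ whenever the interior branch applies. Once the equivalence of the two cases with $\lambda^*\lessgtr\iota_k(\alpha^*)$ is established, the rest is algebra.
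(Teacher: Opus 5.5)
Your proof is correct and follows the same route as the paper: invoke Slater's condition so the KKT conditions hold, use \Cref{lemma: p positive} to get $\xi_k^*=0$, solve stationarity in the non-saturated case to get $x_k^*=h_k^2/(h_k^2-\lambda^*b_k)^2$, and identify $\iota_k(\alpha^*)$ as the switching threshold. Your explicit handling of the $\rho_k^*>0$ branch, of the sign of $h_k^2-\lambda^*b_k$ (which the paper's stated equivalence needs but leaves unstated), and of the tie $\lambda^*=\iota_k(\alpha^*)$ supplies steps that the paper's shorter argument leaves implicit.
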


\begin{proof}
    The optimization problem is convex, and Slater's condition holds under the strict feasibility condition.
The KKT conditions are therefore necessary and sufficient for optimality. By complementary slackness, we clearly have $\nu^*=0$ and for any $k\in\setK$, $\xi_k^*=0$ by~\Cref{lemma: p positive}.
For~$k\in\setK$, if $0<x_k^*<P_{\max}\alpha^*$, then $\rho_k^*=0$, and we obtain:
$x_k^* = \frac{h_k^2 }{(h_k^2 - \lambda^* b_k)^2}$.
We aim to find the exact region where $x_k^*<P_{\max}\alpha^*$,
\begin{equation}
\frac{h_k^2 }{(h_k^2 \!-\! \lambda^* b_k)^2} \!<\! P_{\!\max} \alpha^*
\iff  \lambda^* \!<\!  \frac{h_k^2}{b_k}\Big(1- \frac{1}{h_k\sqrt{\alpha^* P_{\max}}}\Big).
\label{eq: equivalence for lambda expression}
\end{equation}
This also implies that if \ac{ED}~$k$ does not transmit at maximum power, then necessarily (since $\lambda^* \geq 0$):
$h_k^2> \frac{1}{\alpha^* P_{\max}}$.
\end{proof}

For given $\alpha>0$ and $\lambda\geq 0$, the power allocation $x_k^*(\alpha, \lambda)$ is fully determined by~\Cref{theorem: closed form solution of x_k}.
\begin{proposition}[Case $\lambda^*=0$]
    \label{theorem: lambda eq zero}
    Under \Cref{assumption: strict feasible}, if $\lambda^*=0$, then the primal optimum of $\mathrm{(P2)}$ is the solution of \Cref{lemma: res_lambda_zero}.
\end{proposition}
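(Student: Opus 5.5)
If $\lambda^*=0$, we will show that the optimum of $\mathrm{(P2)}$ is also optimal for the problem with \eqref{cons (pb xk): sensing} removed. The conclusion then follows from \Cref{lemma: res_lambda_zero} together with a uniqueness argument.

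\emph{Step 1: the KKT system with $\lambda^*=0$ is the KKT system of the relaxed problem.} Let $(x^*,\alpha^*,\lambda^*=0,\xi^*,\rho^*,\nu^*)$ be a primal--dual optimal pair of $\mathrm{(P2)}$. Such a pair exists by \Cref{theorem: solution existence} and Slater's condition under \Cref{assumption: strict feasible}. Setting $\lambda^*=0$ in \eqref{KKT: grad_xk_Sc} and \eqref{KKT: grad_alpha} deletes every term involving the sensing constraint. What remains, together with the power constraints, $\alpha^*>0$, dual feasibility and the remaining complementary slackness conditions, is exactly the KKT system of $\mathrm{(P2)}$ without \eqref{cons (pb xk): sensing}.

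\emph{Step 2: optimality for the relaxed problem.} The relaxed problem is still convex, with the same objective and affine constraints, and it is strictly feasible (take $x_k=P_{\max}\alpha/2$). Its KKT conditions are therefore sufficient, so $(x^*,\alpha^*)$ solves the relaxed problem. Because the relaxed feasible set contains that of $\mathrm{(P2)}$, the two problems have the same optimal value.

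\emph{Step 3: identification with \Cref{lemma: res_lambda_zero}.} By \Cref{theorem: closed form solution of x_k} with $\lambda^*=0$, we have $x_k^*=\min\{1/h_k^2,\alpha^*P_{\max}\}$, where $\iota_k(\alpha^*)>0$ exactly when $1/h_k^2<\alpha^*P_{\max}$. For $h_1\ge\cdots\ge h_K$, this is the threshold structure of \Cref{lemma: res_lambda_zero} for some index $\ell$. Write $S=\{k: x_k^*=\alpha^*P_{\max}\}$, so that $\sum_{k\in S}\rho_k^*=\sigma_n^2/P_{\max}$ by \eqref{KKT: grad_alpha}. Summing \eqref{KKT: grad_xk_Sc} over $S$ gives
\begin{equation*}
\sum_{k\in S}\Big(\frac{h_k}{\sqrt{\alpha^*P_{\max}}}-h_k^2\Big)=\frac{\sigma_n^2}{P_{\max}},
\end{equation*}
and solving this for $\alpha^*$ yields the expression $\alpha^{(\ell)}$ of \Cref{lemma: res_lambda_zero}.

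\emph{Main obstacle: uniqueness.} The statement says ``the'' primal optimum, so we must rule out another relaxed optimum (or another index $\ell$) that violates sensing. The objective of $\mathrm{(P2)}$ is strictly convex in each $x_k$, because $-2h_k\sqrt{x_k}$ is strictly convex on $x_k>0$ (and $x_k^*>0$ by \Cref{lemma: p positive}). It is linear in $\alpha$, with positive slope $\sigma_n^2$. Suppose two optima differ. The set of optima is convex, so strict convexity in $x$ forces equal $x$ components. With $x$ fixed, the objective decreases strictly in $\alpha$, so the optimal $\alpha$ is the smallest one with $x_k\le P_{\max}\alpha$; the relaxed problem fixes $\alpha$ to this same minimal value. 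The relaxed optimum is therefore unique, and it coincides with the optimum of $\mathrm{(P2)}$. Consequently the solution produced by \Cref{lemma: res_lambda_zero} automatically satisfies \eqref{cons (pb xk): sensing} whenever $\lambda^*=0$.
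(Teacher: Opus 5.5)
Your proof is correct and takes the same route as the paper: with $\lambda^*=0$ the KKT system of $\mathrm{(P2)}$ collapses to that of the problem without the sensing constraint, whose solution is the one given in \Cref{lemma: res_lambda_zero}. Your explicit derivation of $\alpha^{(\ell)}$ and your strict-convexity uniqueness argument spell out what the paper dismisses as ``follows directly'', in particular why the lemma's solution is \emph{the} optimum and therefore automatically satisfies the sensing constraint. One small slip to fix: with $x$ fixed the objective \emph{increases} strictly in $\alpha$ (slope $\sigma_n^2>0$), not decreases, and the increasing direction is what your conclusion about the minimal feasible $\alpha$ actually uses.
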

\begin{proof}
This follows directly: the dual variable $\lambda$ is associated with the sensing constraint, and when $\lambda^*=0$, the sensing constraint is inactive. The problem thus reduces to the case without the sensing constraint, whose solution is given by \Cref{lemma: res_lambda_zero}. Note that when $\lambda^* = 0$, $\lambda^* < \iota_k(\alpha^*)$ for any $k\in\setK$ is equivalent to the condition $1 / h_k^2 \leq   \alpha^* P_{\max}$, which coincides with the condition for $x_k^*$ to be smaller than $P_{\max}\alpha^*$ in \Cref{lemma: res_lambda_zero}.
\end{proof}

The following theorem characterizes necessary and sufficient conditions for optimal~$(\alpha^*,\lambda^*)$.

\begin{theorem}[Optimality conditions]
\label{th: equivalent_KKT_conditions}
Under \Cref{assumption: strict feasible} and if $\lambda^*>0$, the primal and dual optimum~$((x_k^*)_{k\in\setK}, \alpha^*, \lambda^*)$ of $\mathrm{(P2)}$ can be fully determined as follows:\\
i) For all $k\in\setK$, $x_k^*(\lambda^*, \alpha^*)$ satisfies~\Cref{theorem: closed form solution of x_k}.
\\
ii) 
\vspace{-.35cm}
\begin{equation}
    \sum_{k\in\setK} b_kx_k^*(\lambda^*, \alpha^*) = \eta_D \alpha^*.
    \label{eq: sum_cstr_equality}
\end{equation}
 iii) 
 \vspace{-.35cm}
 \begin{equation}
  \hspace{.3cm} 
  \frac{\sigma_n^2 \!+\! \lambda^* \eta_D}{P_{\max}} \!=\!  \sum_{k\in\setK} \left(\!\frac{h_k }{\sqrt{P_{\max}\alpha^*}} \!-\! (h_k^2 \!-\! \lambda^* b_k)\! \right)^+\!\!.
 \end{equation}
\end{theorem}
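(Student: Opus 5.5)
The plan is to read all three conditions off the KKT system of $\mathrm{(P2)}$. Since $\mathrm{(P2)}$ is convex and satisfies Slater's condition under \Cref{assumption: strict feasible}, the KKT conditions are necessary and sufficient. I will first show that any KKT point with $\lambda^*>0$ satisfies i)--iii). I will then argue the converse: any $(\alpha^*,\lambda^*)$ with $\alpha^*>0$, $\lambda^*>0$ satisfying i)--iii) can be completed with multipliers to a KKT point, and is therefore optimal. This converse is what justifies saying the optimum is ``fully determined'' by these conditions.

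\emph{Necessity.} Item i) is exactly \Cref{theorem: closed form solution of x_k}, so nothing new is needed there. For ii), I use complementary slackness of the sensing constraint: $\lambda^*>0$ forces $\eta_D\alpha^*=\sum_k b_kx_k^*$.

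For iii), I combine the two stationarity conditions. Since $\alpha^*>0$, we have $\nu^*=0$, and by \Cref{lemma: p positive} also $\xi_k^*=0$. Condition \eqref{KKT: grad_alpha} then reads
\[
\sigma_n^2+\lambda^*\eta_D=P_{\max}\sum_k\rho_k^*.
\]
From \eqref{KKT: grad_xk_Sc}, each multiplier is
\[
\rho_k^*=\frac{h_k}{\sqrt{x_k^*}}-(h_k^2-\lambda^*b_k).
\]
I then split into two cases according to the indicator in \eqref{eq: xk Sc expression}:
\begin{itemize}
\item If $x_k^*<P_{\max}\alpha^*$, then $\rho_k^*=0$. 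The equivalence \eqref{eq: equivalence for lambda expression} shows that in this case $h_k/\sqrt{P_{\max}\alpha^*}-(h_k^2-\lambda^*b_k)<0$, so its positive part is $0=\rho_k^*$.
\item If $x_k^*=P_{\max}\alpha^*$, then $\rho_k^*=h_k/\sqrt{P_{\max}\alpha^*}-(h_k^2-\lambda^*b_k)\ge 0$ by dual feasibility, so it equals its own positive part.
\end{itemize}
In both cases $\rho_k^*=\big(h_k/\sqrt{P_{\max}\alpha^*}-(h_k^2-\lambda^*b_k)\big)^+$. Substituting into the $\alpha$-stationarity equation and dividing by $P_{\max}$ gives iii).

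\emph{Sufficiency.} Given $(\alpha^*,\lambda^*)$ satisfying i)--iii), I define the multipliers by
\[
\rho_k^*\defeq\big(h_k/\sqrt{P_{\max}\alpha^*}-(h_k^2-\lambda^*b_k)\big)^+,\qquad \xi_k^*=\nu^*=0.
\]
I then check the KKT conditions one by one:
\begin{itemize}
\item Primal feasibility holds because $x_k^*\le P_{\max}\alpha^*$ by construction, and the sensing constraint holds with equality by ii).
\item Complementary slackness for $\rho_k$ holds because $\rho_k^*>0$ only in the maximum-power branch.
\item Stationarity in $x_k$ must be verified in each branch of \eqref{eq: xk Sc expression}. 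In the interior branch it follows because $h_k/\sqrt{x_k^*}=h_k^2-\lambda^*b_k>0$.
\item Stationarity in $\alpha$ is exactly iii).
\end{itemize}

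The main obstacle is the boundary case $\lambda^*=\iota_k(\alpha^*)$ together with the sign condition $h_k^2-\lambda^*b_k>0$ in the interior branch. The formula $h_k^2/(h_k^2-\lambda^*b_k)^2$ arises from squaring, so in principle it could correspond to a negative root. I would handle this by noting that $\lambda^*<\iota_k(\alpha^*)$ already implies $\lambda^*b_k<h_k^2$, which makes $h_k/\sqrt{x_k^*}$ consistent with the stationarity condition. At equality $\lambda^*=\iota_k(\alpha^*)$, both branches give $x_k^*=P_{\max}\alpha^*$ and $\rho_k^*=0$, so the two cases agree and the split is well defined.
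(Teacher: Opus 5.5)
Your necessity argument is correct and follows the paper's proof. You get ii) from complementary slackness with $\lambda^*>0$, and iii) by setting $\nu^*=\xi_k^*=0$, identifying $\rho_k^*=\bigl(h_k/\sqrt{P_{\max}\alpha^*}-(h_k^2-\lambda^* b_k)\bigr)^+$ from $x_k$-stationarity, and substituting into $\alpha$-stationarity; your case split on whether $x_k^*$ hits the power cap is a slightly cleaner version of the paper's split on the sign of $h_k^2-\lambda^* b_k$. Your converse, which builds the multipliers explicitly and checks every KKT condition, is also correct and goes beyond what the paper proves here, since the paper only asserts this sufficiency later in \Cref{theorem: sufficient condition optimality set form} with a one-line appeal to KKT sufficiency.
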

\begin{proof}
In Appendix~\ref{proof: theorem equivalent KKT conditions}.
\end{proof}

The positive-part functions in condition~(iii) make it difficult to analyze the relation between $\alpha^*$ and $\lambda^*$.
Let $\overline{\setS}^{(\lambda^*,\alpha^*)}\subset\setK$ be the set of \acp{ED} that transmit with maximum power.
Then we obtain the following closed-form expression of $\alpha^*$:

\begin{corollary}
\label{th: alpha final condition}
Under the same assumptions as in~\Cref{th: equivalent_KKT_conditions}, given the active set $\overline{\setS}^{(\lambda^*, \alpha^*)} = \{k\in\setK\mid x_k^* = \alpha^* P_{\max}\}$, we have the following expression for $\alpha^*=\alpha^*_{(iii)}(\lambda^*)$ given~$\lambda^*$:
\begin{equation}
\alpha^*_{(iii)}(\lambda^*)\! \defeq \! \frac{1}{P_{\max}}\! \Bigg(\! \frac{\sum_{k\in\overline{\setS}^{(\lambda^*, \alpha^*)}}h_k }{\frac{\sigma_n^2 + \lambda^*\! D^{(\lambda^*\!, \alpha^*\!)}}{P_{\max}} \!+\! \sum_{k\in\overline{\setS}^{(\lambda^*\!, \alpha^*)}} \! h_k^2 } \!
\Bigg)^2 \!,
\label{eq: alpha^* expression}
\end{equation}
with $D^{(\lambda^* , \alpha^* )} =\eta_D - \sum_{k\in\overline{\setS}^{(\lambda^* , \alpha^* )}} b_kP_{\max}$.
In addition, $D^{(\lambda^* , \alpha^* )} > 0$, and $\emptyset \subsetneq \overline{\setS}^{(\lambda^*, \alpha^*)} \subsetneq \setK $.
\end{corollary}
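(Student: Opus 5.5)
The plan is to combine conditions (ii) and (iii) of \Cref{th: equivalent_KKT_conditions} with the closed form of \Cref{theorem: closed form solution of x_k}, splitting $\setK$ into the saturated set $\overline{\setS}\defeq\overline{\setS}^{(\lambda^*,\alpha^*)}$ and its complement $\setS=\setK\setminus\overline{\setS}$.

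\textbf{Step 1: identify the positive-part terms in (iii).} Show that the $k$-th summand in (iii) is positive exactly when $k\in\overline{\setS}$. By \eqref{eq: equivalence for lambda expression}, $x_k^*<P_{\max}\alpha^*$ is equivalent to $\lambda^*<\iota_k(\alpha^*)$. Rearranging, this is the same as
\[
\frac{h_k}{\sqrt{P_{\max}\alpha^*}}<h_k^2-\lambda^*b_k .
\]
So the positive part vanishes for $k\in\setS$. For $k\in\overline{\setS}$ we have $\lambda^*\geq\iota_k(\alpha^*)$, and the bracket is nonnegative; at the boundary it is zero, which contributes nothing either way.

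\textbf{Step 2: solve (iii) for $\alpha^*$.} Condition (iii) therefore reads
\[
\frac{\sigma_n^2+\lambda^*\eta_D}{P_{\max}}=\frac{\sum_{\overline{\setS}}h_k}{\sqrt{P_{\max}\alpha^*}}-\sum_{\overline{\setS}}h_k^2+\lambda^*\sum_{\overline{\setS}}b_k .
\]
Move the $\lambda^*\sum_{\overline{\setS}}b_k$ term to the left. Since $\lambda^*\eta_D-\lambda^*P_{\max}\sum_{\overline{\setS}}b_k=\lambda^*D$, we get
\[
\frac{\sigma_n^2+\lambda^*D}{P_{\max}}+\sum_{\overline{\setS}}h_k^2=\frac{\sum_{\overline{\setS}}h_k}{\sqrt{P_{\max}\alpha^*}} .
\]
Solving for $\sqrt{P_{\max}\alpha^*}$ and squaring gives \eqref{eq: alpha^* expression}. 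This requires the left-hand denominator to be positive, which follows once $D>0$ is established.

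\textbf{Step 3: $\overline{\setS}\neq\emptyset$.} If $\overline{\setS}=\emptyset$, the right side of (iii) is zero. The left side is at least $\sigma_n^2/P_{\max}>0$, a contradiction.

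\textbf{Step 4: $D>0$ and $\overline{\setS}\neq\setK$.} Use (ii):
\[
\eta_D\alpha^*=\alpha^*P_{\max}\sum_{\overline{\setS}}b_k+\sum_{\setS}b_kx_k^* .
\]
This gives $D\alpha^*=\sum_{\setS}b_kx_k^*\geq0$. If $\overline{\setS}=\setK$, then $D=\eta_D-P_{\max}\sum_k b_k<0$ by \Cref{assumption: strict feasible}, which contradicts $D\alpha^*\geq0$; hence $\setS\neq\emptyset$. Then \Cref{lemma: p positive} gives $x_k^*>0$ on $\setS$, and since $b_k>0$ and $\alpha^*>0$, we conclude $D>0$.

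The main obstacle is Step 1. The boundary case $\lambda^*=\iota_k(\alpha^*)$ must be handled consistently with the indicator convention in \eqref{eq: xk Sc expression}, so that $\overline{\setS}$ as defined (equality $x_k^*=\alpha^*P_{\max}$) matches the support of the positive parts. I would check that boundary indices contribute zero to (iii) and also satisfy $h_k^2/(h_k^2-\lambda^*b_k)^2=\alpha^*P_{\max}$. This makes their assignment to either set immaterial for both (ii) and (iii).
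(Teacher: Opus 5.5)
Your proposal is correct and follows essentially the same route as the paper. You isolate the $\overline{\setS}$ terms in condition (iii) and solve for $\sqrt{P_{\max}\alpha^*}$, rule out $\overline{\setS}=\emptyset$ because the left side of (iii) is strictly positive, and use the equality (ii) with strict feasibility to obtain $\overline{\setS}\neq\setK$ and $D^{(\lambda^*,\alpha^*)}>0$; your Steps 1 and 4 just make explicit what the paper leaves implicit, namely that the positive parts in (iii) are supported on $\overline{\setS}$ (boundary indices contribute zero) and that $D^{(\lambda^*,\alpha^*)}\alpha^*=\sum_{k\in\setK\setminus\overline{\setS}}b_kx_k^*>0$.
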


\begin{proof}
    The idea is to express $\alpha^*$ as a function of $\lambda^*$.

By isolating the terms involving $\alpha^*$:
\begin{multline}
\frac{\sum_{k\in\overline{\setS}^{(\lambda, \alpha)}}h_k }{\sqrt{P_{\max}\alpha^*}} \!=\! \frac{\sigma_n^2 \!+\! \lambda \eta_D}{P_{\max}} + \hspace{-.2cm}\sum_{k\in\overline{\setS}^{(\lambda, \alpha)}}\hspace{-.12cm}(h_k^2 - \lambda b_k).
\end{multline}
And therefore,
\begin{equation}
\hspace{-.23cm}\sqrt{P_{\! \max}\alpha^*}
\!=\! \frac{\sum\limits_{k\in\overline{\setS}^{(\lambda, \alpha)}}h_k }{\frac{\sigma_n^2 + \lambda \eta_D}{P_{\max}} \!+ \hspace{-.4cm} \sum\limits_{k\in\overline{\setS}^{(\lambda, \alpha)}}\hspace{-.25cm}  (h_k^2 \!-\! \lambda b_k)}.
\end{equation}

Denoting $D^{(\lambda, \alpha)}$ as in the proposition statement, $\alpha^*$ has the expression of $\alpha^*_{(iii)}(\lambda^*)$.
The quantity  $D^{(\lambda, \alpha)}$ must be strictly positive; otherwise, the sensing constraint cannot be satisfied with equality as required by \emph{ii)}.

$\overline{\setS}^{(\lambda^*, \alpha^*)} \neq \emptyset$; otherwise, no \ac{ED} transmits at maximum power, so $\rho_k^*=0$ for all $k\in\setK$, and condition~\eqref{KKT: grad_alpha} cannot hold.
$\overline{\setS}^{(\lambda^*, \alpha^*)} \neq \setK$; otherwise, all \acp{ED} transmit at maximum power, and equality~\eqref{eq: sum_cstr_equality} must hold by complementary slackness of $\lambda^*>0$, which contradicts the strict feasibility condition in \Cref{assumption: strict feasible}.
\end{proof}
With~$\lambda^*=0$, we again obtain the form of~$\alpha^*$ in the case without sensing in \Cref{lemma: res_lambda_zero}. 
Having expressed $\alpha^*$ in terms of $\lambda^*$, it remains to determine~$\lambda^*$.
The following corollary provides the necessary conditions for the optimal~$\lambda^*$.

\begin{corollary}[Optimal Condition on $\lambda^*$]
\label{th: lambda final condition}
   Under the same assumptions as in \Cref{th: equivalent_KKT_conditions}, let $\overline{\setS}^{(\lambda^*, \alpha^*)}$ and $D^{(\lambda^*, \alpha^*)}$ be defined as in \Cref{th: alpha final condition}.
    $\lambda^*$ satisfies the following equality:
    \begin{multline}
  \sum_{k\in\setK\backslash \overline{\setS}^{(\lambda^*, \alpha^*)}} b_k\frac{h_k^2 }{(h_k^2 - \lambda^*b_k)^2}= D^{(\lambda^*, \alpha^*)}\alpha_{(iii)}^*(\lambda^*).
\label{eq: lambda final condition}
\end{multline}
In addition, $\lambda^*$ satisfies~\eqref{eq: equivalence for lambda expression} for all $k\in\setK\backslash \overline{\setS}^{(\lambda^*, \alpha^*)}$, i.e., $\lambda^* < \min_{k\in\setK\backslash \overline{\setS}^{(\lambda^*, \alpha^*)}} \{ h_k^2/b_k(1-1/(h_k\sqrt{\alpha^* P_{\max}})) \}$.
\end{corollary}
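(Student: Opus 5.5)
The plan is to substitute the closed-form expression of \Cref{theorem: closed form solution of x_k} into condition~ii) of \Cref{th: equivalent_KKT_conditions}. Condition~i) splits $\setK$ into two groups. On $\overline{\setS}^{(\lambda^*,\alpha^*)}$ we have $x_k^*=\alpha^*P_{\max}$. On $\setK\backslash\overline{\setS}^{(\lambda^*,\alpha^*)}$ we have $x_k^*=h_k^2/(h_k^2-\lambda^*b_k)^2$, and this holds precisely when $\lambda^*<\iota_k(\alpha^*)$.

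First I would write \eqref{eq: sum_cstr_equality} as
\begin{equation*}
\sum_{k\in\setK\backslash \overline{\setS}^{(\lambda^*, \alpha^*)}} b_k\frac{h_k^2}{(h_k^2-\lambda^*b_k)^2} + \alpha^*P_{\max}\sum_{k\in\overline{\setS}^{(\lambda^*, \alpha^*)}} b_k = \eta_D\alpha^*.
\end{equation*}
Moving the second term to the right-hand side and factoring out $\alpha^*$ turns the right-hand side into $\alpha^*\big(\eta_D-\sum_{k\in\overline{\setS}^{(\lambda^*, \alpha^*)}}b_kP_{\max}\big)=D^{(\lambda^*,\alpha^*)}\alpha^*$.

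Next I would replace $\alpha^*$ by $\alpha^*_{(iii)}(\lambda^*)$ from \Cref{th: alpha final condition}. That corollary holds under the same assumptions and with the same active set, so the substitution is legitimate and yields \eqref{eq: lambda final condition}.

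The strict inequality on $\lambda^*$ follows from how the active set is defined. For $k\notin\overline{\setS}^{(\lambda^*,\alpha^*)}$ we have $x_k^*\neq\alpha^*P_{\max}$, so by the dichotomy in \eqref{eq: xk Sc expression} the indicator $\Indic_{\{\lambda^*<\iota_k(\alpha^*)\}}$ must be active. Equivalently, \eqref{eq: equivalence for lambda expression} holds for each such $k$, and taking the minimum over this finite set gives the stated bound. The set is nonempty because $\overline{\setS}^{(\lambda^*,\alpha^*)}\subsetneq\setK$ by \Cref{th: alpha final condition}, so the minimum is well defined.

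The one step needing care is the boundary case $x_k^*=\alpha^*P_{\max}$ with $\lambda^*=\iota_k(\alpha^*)$. Here both expressions coincide, and \eqref{eq: xk Sc expression} assigns the index to the max-power branch. This is consistent with the definition of $\overline{\setS}$, so no index is double-counted and the strict inequality is exact. I would also note that $h_k^2-\lambda^*b_k>0$ for indices outside $\overline{\setS}^{(\lambda^*,\alpha^*)}$, since $\lambda^*<\iota_k<h_k^2/b_k$. Hence the expression on the left of \eqref{eq: lambda final condition} is well defined.
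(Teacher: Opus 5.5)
Your proposal is correct and follows essentially the same route as the paper's proof: impose the sensing equality from condition~ii) of \Cref{th: equivalent_KKT_conditions}, split the sum over $\overline{\setS}^{(\lambda^*,\alpha^*)}$ and its complement, factor out $\alpha^*$ to obtain $D^{(\lambda^*,\alpha^*)}\alpha^*$, and substitute $\alpha^*=\alpha^*_{(iii)}(\lambda^*)$ from \Cref{th: alpha final condition}. Your derivation of the strict bound on $\lambda^*$ from the indicator dichotomy in \Cref{theorem: closed form solution of x_k} is sound, as are your checks that $\setK\backslash\overline{\setS}^{(\lambda^*,\alpha^*)}$ is nonempty and that $h_k^2-\lambda^*b_k>0$ there; this makes explicit a step the paper leaves implicit.
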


\begin{proof}
In Appendix~\ref{proof: theorem lambda final condition}.
\end{proof}

At this point, if the active set $\overline{\setS}^{(\lambda^*, \alpha^*)}$ is known, the dual optimum $\lambda^*$ can be readily obtained via \Cref{th: lambda final condition}.
A zero-finding algorithm (e.g., bisection or Newton's method~\cite{Book_numericalAnalysis}) can be employed to find the unique $\lambda^*$ via~\eqref{eq: lambda final condition}, since the LHS is monotonically increasing and the RHS is monotonically decreasing in $\lambda^*$.

The following theorem establishes that if the set of form~$\overline{\setS}^{(\overline{\lambda}, \overline{\alpha})}$ can be found for which there exists $\overline{\lambda}$ satisfying the conditions in \Cref{th: lambda final condition}, then the corresponding $\overline{\lambda}$ and $\alpha^*_{(iii)}(\overline{\lambda})$ are optimal.

\begin{theorem}
\label{theorem: sufficient condition optimality set form}
Under the same assumptions as in \Cref{th: equivalent_KKT_conditions}, for $\lambda,\alpha>0$, define the active set $\overline{\setS}^{(\lambda, \alpha)} \defeq \{k\in\setK\mid x_k^*(\lambda,\alpha) = \alpha P_{\max}\}$, with $x_k^*(\lambda,\alpha)$ satisfying \Cref{theorem: closed form solution of x_k}. 
If the condition in \Cref{th: lambda final condition} is satisfied with $\overline{\lambda}$ and $\overline{\alpha}=\alpha^*_{(iii)}(\overline{\lambda})$ that preserves the set~$\overline{\setS}^{(\overline{\lambda}, \overline{\alpha})}=\overline{\setS}^{(\lambda, \alpha)}$, then the resulting $\overline{\lambda}$ is the dual optimum, and $\overline{\alpha}$ and the corresponding $x_k^*(\overline{\lambda},\overline{\alpha})$ are the primal optimum of $\mathrm{(P2)}$.
\end{theorem}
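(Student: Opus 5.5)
The plan is a sufficiency argument via the KKT conditions. Since $\mathrm{(P2)}$ is convex and, under \Cref{assumption: strict feasible}, Slater's condition holds, any primal–dual tuple satisfying the KKT system is optimal. Starting from $\overline{\lambda}>0$ and $\overline{\alpha}=\alpha^*_{(iii)}(\overline{\lambda})$, I will construct explicit multipliers $\xi_k,\rho_k,\nu$ and verify every KKT condition at the point $(x_k^*(\overline{\lambda},\overline{\alpha}))_k,\overline{\alpha}$. Throughout, $\overline{\setS}$ denotes the preserved active set $\overline{\setS}^{(\overline{\lambda},\overline{\alpha})}$. The set-preservation hypothesis is what makes $\alpha^*_{(iii)}$ self-consistent: the formula for $\alpha^*_{(iii)}(\overline{\lambda})$ was built from the set $\overline{\setS}^{(\lambda,\alpha)}$, and this must coincide with the set actually induced by $(\overline{\lambda},\overline{\alpha})$ through \Cref{theorem: closed form solution of x_k}.

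\textbf{Primal feasibility.} For $k\notin\overline{\setS}$ we have $\overline{\lambda}<\iota_k(\overline{\alpha})$, and by~\eqref{eq: equivalence for lambda expression} this gives $0<x_k=h_k^2/(h_k^2-\overline{\lambda}b_k)^2<P_{\max}\overline{\alpha}$. For $k\in\overline{\setS}$ we have $x_k=P_{\max}\overline{\alpha}$. So~\eqref{cons (pb xk): power} holds. The sensing constraint~\eqref{cons (pb xk): sensing} holds with equality: the left-hand side equals $\sum_{k\notin\overline{\setS}}b_kx_k+P_{\max}\overline{\alpha}\sum_{k\in\overline{\setS}}b_k$, and by~\eqref{eq: lambda final condition} this is $D\overline{\alpha}+(\eta_D-D)\overline{\alpha}=\eta_D\overline{\alpha}$. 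Hence complementary slackness for $\overline{\lambda}>0$ also holds.

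\textbf{Dual multipliers.} Set $\nu=0$ (valid since $\overline{\alpha}>0$) and $\xi_k=0$ (valid since $x_k>0$). For $k\notin\overline{\setS}$ set $\rho_k=0$; the stationarity condition~\eqref{KKT: grad_xk_Sc} then holds by the closed form of $x_k$, since $h_k^2-\overline{\lambda}b_k>0$. For $k\in\overline{\setS}$ set
\[
\rho_k=\frac{h_k}{\sqrt{P_{\max}\overline{\alpha}}}-(h_k^2-\overline{\lambda}b_k),
\]
so that~\eqref{KKT: grad_xk_Sc} holds by construction. Nonnegativity of $\rho_k$ is equivalent to $\overline{\lambda}\ge\iota_k(\overline{\alpha})$, which is exactly the membership condition $k\in\overline{\setS}$. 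Complementary slackness $\rho_k(x_k-P_{\max}\overline{\alpha})=0$ is then immediate.

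\textbf{Stationarity in $\alpha$.} It remains to check~\eqref{KKT: grad_alpha}, namely $\sigma_n^2+\overline{\lambda}\eta_D=P_{\max}\sum_{k\in\overline{\setS}}\rho_k$. Substituting the $\rho_k$ and rearranging, this is $\sum_{k\in\overline{\setS}}h_k/\sqrt{P_{\max}\overline{\alpha}}=(\sigma_n^2+\overline{\lambda}\eta_D)/P_{\max}+\sum_{k\in\overline{\setS}}(h_k^2-\overline{\lambda}b_k)$. This is precisely the defining identity of $\alpha^*_{(iii)}(\overline{\lambda})$ in \Cref{th: alpha final condition}, obtained by inverting its derivation. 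Inverting requires the denominator $(\sigma_n^2+\overline{\lambda}D)/P_{\max}+\sum_{k\in\overline{\setS}}h_k^2$ to be positive, which holds because $D>0$ (a consequence of~\eqref{eq: lambda final condition}, whose left-hand side is positive) and $\overline{\lambda}>0$. This step is where set preservation is genuinely needed, and I expect the bookkeeping here to be the main obstacle: the identity must be applied with the same $\overline{\setS}$ that determines the $\rho_k$, which is exactly what the hypothesis guarantees.

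\textbf{Conclusion.} All KKT conditions hold, so by sufficiency of KKT for convex problems the point $((x_k^*(\overline{\lambda},\overline{\alpha}))_k,\overline{\alpha})$ is primal optimal and $(\overline{\lambda},\rho,\xi,\nu)$ is dual optimal. In particular, $\overline{\lambda}$ is the dual optimum of the sensing constraint.
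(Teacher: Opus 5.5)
Your proposal is correct and follows the paper's route, sufficiency of the KKT conditions for the convex problem $\mathrm{(P2)}$. Where the paper only asserts that all KKT conditions hold, you construct the multipliers ($\nu=0$, $\xi_k=0$, $\rho_k=0$ off the active set and $\rho_k=h_k/\sqrt{P_{\max}\overline{\alpha}}-(h_k^2-\overline{\lambda}b_k)\geq 0$ on it) and check each condition; the one unstated detail is that the left-hand side of \eqref{eq: lambda final condition} is positive (needed for $D>0$), i.e.\ $\overline{\setS}\neq\setK$, which holds because $\overline{\setS}=\setK$ would force $0=D\overline{\alpha}<0$ under strict feasibility.
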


\begin{proof}
    The theorem follows from the sufficiency of the KKT conditions for optimality in convex optimization under Slater's constraint qualification. If such a point exists, then all KKT conditions hold, and optimality follows.
\end{proof}

The remaining task is to characterize the active set $\overline{\setS}^{(\lambda, \alpha)}$ and to determine whether there exists such $\lambda$ satisfying the condition:
\begin{equation*}
   \mathrm{(P2)} \iff \textrm{find }\overline{\setS}^{(\lambda, \alpha)} \textrm{ such that \Cref{th: lambda final condition} holds.}
   \label{eq: problem equivalence of finding sets}
\end{equation*}

\begin{theorem}
\label{th: alpha ranges}
The ordering of $\{\iota_k(\alpha)\}_{k\in\setK}$ depends on $\alpha$ and admits at most $K(K- 1)/2 + 1$ distinct orderings. The transition points (in $\alpha\geq 0$) are characterized by:
\begin{equation}
\hspace{-.27cm}\Big(\forall (k,j)\! \in\! \Big\{\!(m,n)\!\in\! \setK^2\mid \frac{h_m^2}{b_m}\! \neq\! \frac{h_n^2}{b_n}\Big\} \Big) \, \sqrt{\!\alpha P_{\!\max}} \!=\!  \frac{ \frac{h_k}{b_k} \!-\! \frac{h_j}{b_j}}{\frac{h_k^2}{b_k} \!-\! \frac{h_j^2}{b_j}}.
\end{equation}
\end{theorem}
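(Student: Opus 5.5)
The plan is to reparametrize by $s \defeq \sqrt{\alpha P_{\max}} > 0$, which is a strictly increasing bijection of $\alpha>0$. With this variable each threshold becomes
\begin{equation*}
\iota_k(\alpha) = \frac{h_k^2}{b_k} - \frac{h_k}{b_k}\cdot\frac{1}{s},
\end{equation*}
an affine function of the single variable $u \defeq 1/s$. Its intercept is $h_k^2/b_k$ and its slope is $-h_k/b_k$. The relative order of $\iota_k$ and $\iota_j$ can only change where $\iota_k(\alpha)=\iota_j(\alpha)$. So the first step is to write out this equality:
\begin{equation*}
\frac{h_k^2}{b_k}-\frac{h_j^2}{b_j} = u\Big(\frac{h_k}{b_k}-\frac{h_j}{b_j}\Big).
\end{equation*}
Whenever $h_k^2/b_k \neq h_j^2/b_j$, this gives $s = 1/u$ equal to the stated ratio. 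That ratio is only a genuine transition point in $\alpha>0$ when it is positive and finite.

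The degenerate cases are handled separately. Suppose the intercepts coincide, $h_k^2/b_k = h_j^2/b_j$. If the slopes also coincide, the two lines are identical and the pair never reorders. If the slopes differ, they cross only at $u=0$, i.e.\ $s=\infty$, so there is no transition for finite $\alpha$. This is why the statement restricts to pairs with distinct $h^2/b$. For pairs with distinct intercepts there is at most one crossing: two distinct affine functions of $u$ meet at most once. When the slopes are equal the lines are parallel and never cross, consistent with the formula's numerator vanishing. In that case $s=0$, which does not lie in $\alpha>0$, matching the ``at most'' phrasing.

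The counting step comes next. There are $K(K-1)/2$ unordered pairs, and each contributes at most one transition value of $\alpha$. On each open interval between consecutive transition values, no two $\iota_k$ change relative order, because $\iota_k-\iota_j$ is continuous in $\alpha$ with no zero there. Hence the ordering is constant on each interval. At most $N \le K(K-1)/2$ transition points partition $(0,\infty)$ into at most $N+1$ intervals. This gives at most $K(K-1)/2+1$ distinct orderings.

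The one subtlety is convention: what ordering is assigned at the transition points themselves, where ties occur. I would treat ties by a fixed tie-breaking rule, for instance by index, or assign tie points to an adjacent interval. With either rule the count of distinct orderings does not increase. I expect this bookkeeping, rather than any computation, to be the only point requiring care.
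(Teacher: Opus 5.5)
Your proposal is correct and follows essentially the same route as the paper's proof. Both arguments set $\iota_k(\alpha)=\iota_j(\alpha)$ and isolate $\sqrt{\alpha P_{\max}}$; both discard pairs with equal $h_k^2/b_k$, which either never reorder or cross only at infinite $\alpha$; both then count at most one transition per unordered pair, giving at most $K(K-1)/2+1$ orderings. What you add, namely the affine-in-$1/\sqrt{\alpha P_{\max}}$ viewpoint, the explicit check that the ratio be positive and finite, and the tie-breaking convention at transition points (which the paper handles implicitly through the half-open intervals $[\alpha_n,\alpha_{n+1})$), is a welcome refinement of the same argument.
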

\begin{proof}
In Appendix~\ref{proof: theorem alpha ranges}.
\end{proof}

Let $\alpha_1<\dots < \alpha_{N_{\alpha}}$ with $N_{\alpha}\leq K(K- 1)/2 $ denote, in ascending order, the valid distinct transition points defined in \Cref{th: alpha ranges} . We set $\alpha_0 =0$ and $\alpha_{N_{\alpha}+1} = +\infty$. Then the ordering of $\{\iota_k(\alpha)\}_{k\in\setK}$ is fixed for any $\alpha\in [\alpha_n, \alpha_{n+1})$.
\begin{theorem}
    \label{th: sets depending on n and i}
Let $n\in\{0,\ldots,N_{\alpha}\}$. Denote $\pi_n$ as the permutation of $\setK$ such that:
\begin{equation}
\iota_{\pi_n(1)}(\alpha) \leq \iota_{\pi_n(2)}(\alpha) \leq \cdots \leq \iota_{\pi_n(K)}(\alpha),
\end{equation}
for $\alpha\in(\alpha_n, \alpha_{n+1})$.
For any rank index $i\in\{1,\ldots,K-1\}$ and $\lambda\in [\iota_{\pi_n(i)}(\alpha), \iota_{\pi_n(i+1)}(\alpha)) $,
the active set~$\overline{\setS}^{(\lambda, \alpha)}$ is determined by:
\begin{equation}
\overline{\setS}^{(\lambda, \alpha)} = \{\pi_n(1),\ldots,\pi_n(i)\} \defeq \overline{\setS}_{n, i}.
\end{equation}
\end{theorem}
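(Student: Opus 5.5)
The proof is a direct unpacking of the definition of the active set through the closed form in \Cref{theorem: closed form solution of x_k}. For fixed $\alpha>0$ and $\lambda\geq 0$, that proposition gives $x_k^*(\lambda,\alpha)=\alpha P_{\max}$ when $\lambda\geq\iota_k(\alpha)$. When $\lambda<\iota_k(\alpha)$ it gives $x_k^*(\lambda,\alpha)=h_k^2/(h_k^2-\lambda b_k)^2$. We therefore first show that membership in $\overline{\setS}^{(\lambda,\alpha)}$ is exactly the condition $\lambda\geq\iota_k(\alpha)$, i.e.
\[
\overline{\setS}^{(\lambda,\alpha)}=\{k\in\setK\mid \lambda\geq \iota_k(\alpha)\}.
\]
The inclusion ``$\supseteq$'' is immediate from the closed form. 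For ``$\subseteq$'' we must exclude the case $\lambda<\iota_k(\alpha)$ with $h_k^2/(h_k^2-\lambda b_k)^2=\alpha P_{\max}$ holding by coincidence. The equivalence~\eqref{eq: equivalence for lambda expression} shows that $\lambda<\iota_k(\alpha)$ implies $h_k^2/(h_k^2-\lambda b_k)^2<\alpha P_{\max}$, so no such coincidence can occur.

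Here a small care point arises, and it is the only step that needs attention. The equivalence~\eqref{eq: equivalence for lambda expression} was derived by taking square roots, which is valid only when $h_k^2-\lambda b_k>0$. We check this sign explicitly. Since $\iota_k(\alpha)=\frac{h_k^2}{b_k}\bigl(1-\frac{1}{h_k\sqrt{\alpha P_{\max}}}\bigr)<\frac{h_k^2}{b_k}$, the condition $\lambda<\iota_k(\alpha)$ gives $h_k^2-\lambda b_k>h_k/\sqrt{\alpha P_{\max}}>0$. Squaring then yields $h_k^2/(h_k^2-\lambda b_k)^2<\alpha P_{\max}$ strictly. Hence the characterization of $\overline{\setS}^{(\lambda,\alpha)}$ holds.

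Next we invoke \Cref{th: alpha ranges}. For $\alpha\in(\alpha_n,\alpha_{n+1})$ no two $\iota_k(\cdot)$ curves with distinct $h_k^2/b_k$ cross. The curves with equal $h_k^2/b_k$ are either identical or never equal, which follows from the transition formula and the form $\iota_k(\alpha)=\frac{h_k^2}{b_k}-\frac{h_k}{b_k}\frac{1}{\sqrt{\alpha P_{\max}}}$. So a single permutation $\pi_n$ sorts $\{\iota_k(\alpha)\}$ nondecreasingly on the whole interval, with ties broken by a fixed rule.

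Now take $\lambda\in[\iota_{\pi_n(i)}(\alpha),\iota_{\pi_n(i+1)}(\alpha))$. By monotonicity of the sorted list, $\lambda\geq\iota_{\pi_n(j)}(\alpha)$ for all $j\leq i$ and $\lambda<\iota_{\pi_n(j)}(\alpha)$ for all $j\geq i+1$. Combined with the characterization above, this gives $\overline{\setS}^{(\lambda,\alpha)}=\{\pi_n(1),\ldots,\pi_n(i)\}=\overline{\setS}_{n,i}$.

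Ties require a remark: if $\iota_{\pi_n(i)}(\alpha)=\iota_{\pi_n(i+1)}(\alpha)$, the interval is empty and the statement is vacuous. Thus the only real obstacle is the sign check $h_k^2-\lambda b_k>0$ that justifies using~\eqref{eq: equivalence for lambda expression}. Everything else is bookkeeping on the ordering.
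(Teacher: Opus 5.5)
Your proposal is correct and matches the reasoning the paper leaves implicit: the paper gives no separate proof and treats the statement as an immediate consequence of the closed form in \Cref{theorem: closed form solution of x_k} (membership in $\overline{\setS}^{(\lambda,\alpha)}$ exactly when $\lambda\geq\iota_k(\alpha)$) together with the fixed ordering of the $\iota_k$ on each interval from \Cref{th: alpha ranges}. Your explicit check that $\lambda<\iota_k(\alpha)$ forces $h_k^2-\lambda b_k>h_k/\sqrt{\alpha P_{\max}}>0$, and hence $h_k^2/(h_k^2-\lambda b_k)^2<\alpha P_{\max}$ strictly, usefully tightens the squaring step behind~\eqref{eq: equivalence for lambda expression}; since that check never uses $\lambda\geq 0$, it also covers the negative part of the $\lambda$-interval that arises when $h_k\sqrt{\alpha P_{\max}}<1$.
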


\begin{theorem}[Optimal Condition on $(n,i)$]
    \label{th: algorithm core theorem}
    Under the same assumptions as in \Cref{th: equivalent_KKT_conditions}, fix $n\in\{0, \ldots, N_{\alpha} \}$, the corresponding permutation $\pi_n$, and a rank index $i\in\{1,\ldots,K-1\}$.
    Define $\overline{\setS}_{n, i}$ as in the previous theorem and
    \begin{equation}
        D_{n,i} \defeq \eta_D - P_{\max}\sum_{k\in \overline{\setS}_{n,i}} b_k,
        \label{eq: D_ni}
    \end{equation}
    as well as
    \begin{equation}
        \alpha^{(iii)}_{n,i}(\lambda)\! \defeq \! \frac{1}{P_{\max}}\Bigg(\! \frac{\sum_{k\in\overline{\setS}_{n,i}}h_k }{\frac{\sigma_n^2 + \lambda D_{n,i}}{P_{\max}} + \sum_{k\in\overline{\setS}_{n,i}} h_k^2 } \!
\Bigg)^2.
    \end{equation}
    There exists an index pair $(n,i)$ such that $D_{n,i}>0$ and the following condition holds. In this case, the set associated with pair $(n,i)$ corresponds to the optimal set $\overline{\setS}_{n,i}= \overline{\setS}^{(\lambda^*, \alpha^*)}$ for some dual optimum $\lambda^*>0$ and primal optimum $\alpha^*$:
    \\
    Define the (within-interval) increasing and decreasing functions, respectively,
    \begin{align}
       \hspace{-.2cm} \mathrm{LHS}_{n,i}(\lambda) &\defeq \sum_{k\in\setK\backslash \overline{\setS}_{n,i}} b_k\frac{h_k^2 }{(h_k^2 - \lambda b_k)^2},\\[-.2cm]
        \hspace{-.2cm}  \mathrm{RHS}_{n,i}(\lambda) &\defeq \frac{D_{n,i}}{P_{\max}}\Bigg(\frac{\sum_{k\in\overline{\setS}_{n,i}}h_k }{\frac{\sigma_n^2 + \lambda D_{n,i}}{P_{\max}} \!+\! \sum_{k\in\overline{\setS}_{n,i}} h_k^2 } \Bigg)^2.
    \end{align}
    There exists $\lambda^* \in \Big(0, \min\limits_{k\in\setK\backslash\overline{\setS}_{n,i}} \Big\{ \frac{h_k^2}{b_k}(1- \frac{1}{h_k\sqrt{\alpha_{n+1}P_{\max}}}) \Big\} \Big)$ s.t. the following condition holds
    \begin{equation}
        \begin{cases}
            \mathrm{LHS}_{n,i}(\lambda^*) = \mathrm{RHS}_{n,i}(\lambda^*)\\
            \alpha^{(iii)}_{n,i}(\lambda^*) \in [\alpha_n, \alpha_{n+1}]\\
            \lambda^* \in [\iota_{\pi_n(i)}(\alpha^{(iii)}_{n,i}(\lambda^*)), \iota_{\pi_n(i+1)}(\alpha^{(iii)}_{n,i}(\lambda^*))).
        \end{cases}
        \label{eq: conditions to hold alpha lambda (crossing point)}
    \end{equation}
\end{theorem}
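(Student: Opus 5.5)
Fix the primal–dual optimum $((x_k^*)_{k\in\setK},\alpha^*,\lambda^*)$ of $\mathrm{(P2)}$ with $\lambda^*>0$. It exists by \Cref{theorem: solution existence}, and the KKT conditions are necessary because Slater's condition holds under \Cref{assumption: strict feasible}. From this point I will build the index pair $(n,i)$ explicitly and then check the three conditions in~\eqref{eq: conditions to hold alpha lambda (crossing point)} one at a time.

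\textbf{Choice of $(n,i)$.} The points $\alpha_0=0<\alpha_1<\dots<\alpha_{N_\alpha}<\alpha_{N_\alpha+1}=+\infty$ partition $(0,\infty)$. Let $n$ be the index with $\alpha^*\in[\alpha_n,\alpha_{n+1})$. By \Cref{th: alpha ranges}, the ordering $\pi_n$ of the values $\iota_k(\alpha)$ is constant on the open interval $(\alpha_n,\alpha_{n+1})$. By continuity of each $\iota_k$ in $\alpha$, the weak ordering still holds at the endpoint $\alpha^*=\alpha_n$; ties there can be broken according to $\pi_n$.

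By \Cref{theorem: closed form solution of x_k}, $k$ lies in the active set $\overline{\setS}^{(\lambda^*,\alpha^*)}$ exactly when $\lambda^*\geq\iota_k(\alpha^*)$. The active set is therefore a prefix $\{\pi_n(1),\ldots,\pi_n(i)\}$ of the sorted order, with $i=|\overline{\setS}^{(\lambda^*,\alpha^*)}|$. \Cref{th: alpha final condition} gives $\emptyset\subsetneq\overline{\setS}^{(\lambda^*,\alpha^*)}\subsetneq\setK$, so $i\in\{1,\ldots,K-1\}$. By the prefix characterization, $\lambda^*\in[\iota_{\pi_n(i)}(\alpha^*),\iota_{\pi_n(i+1)}(\alpha^*))$, and \Cref{th: sets depending on n and i} then identifies $\overline{\setS}_{n,i}=\overline{\setS}^{(\lambda^*,\alpha^*)}$. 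The same corollary gives $D_{n,i}=D^{(\lambda^*,\alpha^*)}>0$.

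\textbf{Verifying the three conditions.} With the set identified, $\alpha^{(iii)}_{n,i}(\lambda^*)$ coincides with $\alpha^*_{(iii)}(\lambda^*)$, which equals $\alpha^*$ by \Cref{th: alpha final condition}.
\begin{itemize}
\item The second condition in~\eqref{eq: conditions to hold alpha lambda (crossing point)} follows because $\alpha^*\in[\alpha_n,\alpha_{n+1}]$.
\item The third condition is exactly the interval statement for $\lambda^*$ established above.
\item The first condition is~\eqref{eq: lambda final condition} of \Cref{th: lambda final condition}, rewritten with $\mathrm{LHS}_{n,i}$ and $\mathrm{RHS}_{n,i}$.
\end{itemize}

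\textbf{Range of $\lambda^*$.} It remains to show $\lambda^*$ lies in the stated open interval. Positivity holds by hypothesis. For the upper bound, \Cref{th: lambda final condition} gives $\lambda^*<\iota_k(\alpha^*)$ for every $k\notin\overline{\setS}_{n,i}$. Each $\iota_k(\alpha)=\frac{h_k^2}{b_k}\big(1-\frac{1}{h_k\sqrt{\alpha P_{\max}}}\big)$ is increasing in $\alpha$, and $\alpha^*\leq\alpha_{n+1}$, so $\iota_k(\alpha^*)\leq\iota_k(\alpha_{n+1})$ (taken as $h_k^2/b_k$ when $\alpha_{n+1}=\infty$). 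Hence $\lambda^*$ is below the minimum over $k\notin\overline{\setS}_{n,i}$.

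\textbf{Expected obstacle.} The delicate step is the boundary case $\alpha^*=\alpha_n$, where two values $\iota_k$ can coincide. There one must confirm that the prefix of the tie-broken order $\pi_n$ still reproduces the active set, and that $\lambda^*$ does not fall on a coincident threshold, which would make membership ambiguous. I would handle this through the half-open intervals and the convention $\lambda\geq\iota_k\Rightarrow$ active: if $\iota_{\pi_n(i)}(\alpha^*)=\iota_{\pi_n(i+1)}(\alpha^*)$, then $\lambda^*$ cannot lie in the half-open interval, but then both indices are active or both inactive, so $i$ is not the cut and the prefix length is chosen differently. If this argument turns out to be fragile, I would instead pick $n$ as the interval whose closure contains $\alpha^*$ and whose ordering matches the actual active set, which is always possible since, by continuity, the orderings on both adjacent open intervals are consistent with the weak order at $\alpha^*$.
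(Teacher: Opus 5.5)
Your existence argument is correct and follows the same route as the paper. You start from an optimal primal--dual point with $\lambda^*>0$, read off the active set from \Cref{theorem: closed form solution of x_k}, and obtain the three conditions from \Cref{th: alpha final condition} and \Cref{th: lambda final condition}. It is more careful than the paper's one-line proof. Taking $i$ as the largest index with $\iota_{\pi_n(i)}(\alpha^*)\le\lambda^*$ makes the active set a prefix of the weak order. It also yields $\lambda^*\in[\iota_{\pi_n(i)}(\alpha^*),\iota_{\pi_n(i+1)}(\alpha^*))$ automatically, ties included. So your ``expected obstacle'' at $\alpha^*=\alpha_n$ is already settled and the fallback is unnecessary. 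This direct argument is what actually covers the endpoint, since \Cref{th: sets depending on n and i} is stated only on the open interval. Your monotonicity argument $\iota_k(\alpha^*)\le\iota_k(\alpha_{n+1})$ for the upper bound on $\lambda^*$ also fills in a step the paper leaves implicit.

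What is missing is the other half of the statement. ``In this case, the set associated with pair $(n,i)$ corresponds to the optimal set'' is a sufficiency claim, which the paper proves with \Cref{theorem: sufficient condition optimality set form}. \Cref{algo: given client selection} relies on it, because it returns the \emph{first} pair that passes the test, not the pair you built from the (unknown) optimum. Your construction only shows that some passing pair exists.

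To close the gap, let $(n,i)$ with $D_{n,i}>0$ and $\bar\lambda>0$ satisfy~\eqref{eq: conditions to hold alpha lambda (crossing point)}, and set $\bar\alpha=\alpha^{(iii)}_{n,i}(\bar\lambda)$. Since $\bar\alpha\in[\alpha_n,\alpha_{n+1}]$, the weak order $\pi_n$ holds at $\bar\alpha$ by continuity. The third condition then gives $\bar\lambda\ge\iota_k(\bar\alpha)$ for $k\in\overline{\setS}_{n,i}$ and $\bar\lambda<\iota_k(\bar\alpha)<h_k^2/b_k$ for $k\notin\overline{\setS}_{n,i}$. Consequently:
\begin{itemize}
\item the closed-form active set at $(\bar\lambda,\bar\alpha)$ is exactly $\overline{\setS}_{n,i}$;
\item the inactive $x_k$ are interior with $h_k^2-\bar\lambda b_k>0$;
\item $\rho_k\defeq\big(\frac{h_k}{\sqrt{P_{\max}\bar\alpha}}-(h_k^2-\bar\lambda b_k)\big)^+$ satisfies~\eqref{KKT: grad_xk_Sc}, with $\rho_k=0$ off $\overline{\setS}_{n,i}$.
\end{itemize}
Next, $\mathrm{LHS}_{n,i}(\bar\lambda)=\mathrm{RHS}_{n,i}(\bar\lambda)=D_{n,i}\bar\alpha$ gives $\sum_k b_kx_k=\eta_D\bar\alpha$. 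Also $\sum_{k\in\overline{\setS}_{n,i}}h_k/\sqrt{P_{\max}\bar\alpha}=q_{n,i}(\bar\lambda)$, the denominator in $\alpha^{(iii)}_{n,i}$, which gives $P_{\max}\sum_k\rho_k=\sigma_n^2+\bar\lambda\eta_D$, i.e.~\eqref{KKT: grad_alpha} with $\nu=0$. All KKT conditions therefore hold, and convexity of $\mathrm{(P2)}$ yields optimality.
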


\begin{proof}
The existence of such an index pair follows from the existence of an optimal solution of $\mathrm{(P1)}$ by \Cref{theorem: solution existence} and its equivalent problem formulations. The necessity of the optimal solution form is provided by \Cref{th: lambda final condition}, and its sufficiency is provided by \Cref{theorem: sufficient condition optimality set form}.

Note that the interval $[\iota_{\pi_n(i)}(\alpha^{(iii)}_{n,i}(\lambda^*)), \iota_{\pi_n(i+1)}(\alpha^{(iii)}_{n,i}(\lambda^*)))$ may be empty for two reasons:
\\
i) if for any $\alpha\in(\alpha_n, \alpha_{n+1})$, $\iota_{\pi_n(i)}(\alpha) = \iota_{\pi_n(i+1)}(\alpha)$.  This happens for pairs $(k,j)\in\setK^2$ that have the same $(h_k,b_k) = (h_j,b_j)$. In this case, the corresponding $\lambda$-interval is empty, so no $\lambda$ can satisfy the condition; therefore, this index pair $(n,i)$ can be ignored.
\\
ii) if $\alpha^{(iii)}_{n,i}(\lambda^*) = \alpha_n$ or $\alpha^{(iii)}_{n,i}(\lambda^*) = \alpha_{n+1}$, then we are at a changing point between indices, and the interval of these index pairs is empty.
\end{proof}

Therefore, such a set must exist, and once found, it is guaranteed to be optimal.

\subsection{Algorithm Development}

Building upon the theoretical characterization in~\Cref{th: algorithm core theorem}, we now develop~\Cref{algo: given client selection} to solve $\mathrm{(P2)}$. The key insight from~\Cref{th: algorithm core theorem} is that the optimal active set $\overline{\setS}^{(\lambda^*, \alpha^*)}$ can be determined by systematically searching through all $K^2(K- 1)/2$ possible set configurations characterized by the index pair $(n, i)$.
The algorithm exploits the structure established in~\Cref{th: alpha ranges} and~\Cref{th: sets depending on n and i}: the ordering of $\{\iota_k(\alpha)\}_{k\in\setK}$ changes at most $K(K- 1)/2$ times as $\alpha$ varies; for each interval $[\alpha_n, \alpha_{n+1})$, the permutation $\pi_n$ remains fixed, and the active set $\overline{\setS}_{n,i}$ is uniquely determined by the rank index~$i$.

The algorithm proceeds by iterating over all $\alpha$-intervals indexed by~$n$ and $\lambda$-interval indexed by~$i$. Within each interval, the conditions from~\Cref{th: algorithm core theorem} are checked: $\lambda^*$ must satisfy $\mathrm{LHS}_{n,i}(\lambda^*) = \mathrm{RHS}_{n,i}(\lambda^*)$ for a $\lambda^*\in[\iota_{\pi_n(i)}(\alpha), \iota_{\pi_n(i+1)}(\alpha))$. Since $\mathrm{LHS}_{n,i}(\lambda)$ is increasing and $\mathrm{RHS}_{n,i}(\lambda)$ is decreasing in $\lambda$ within the feasible region, a root-finding~\cite{Book_numericalAnalysis} method efficiently finds the \emph{unique crossing point} $\lambda^*$.
For each candidate solution, the algorithm verifies that the computed $\alpha^{(iii)}_{n,i}(\lambda^*)$ lies in the correct $\alpha$-interval and that $\lambda^*$ satisfies the required $\lambda$-range constraints. The resulting valid candidate yields the optimal solution.
The overall complexity of~\Cref{algo: given client selection} is $\mathcal{O}(|\setK|^3)$. Sorting the $\mathcal{O}(|\setK|^2)$ transition values incurs $\mathcal{O}(|\setK|^2 \log(|\setK|))$ complexity. The algorithm then iterates over $\mathcal{O}(|\setK|^2)$ $\alpha$-intervals, checking $\mathcal{O}(|\setK|)$ candidate solutions per interval. Since the logarithmic term is dominated, the overall complexity takes the stated form. 
Since the algorithm has such analytical characterization, all the operations may be transformed into matrix/tensor operations, resulting in further acceleration.

\begin{algorithm}
\DontPrintSemicolon
\SetAlgoNoEnd
\caption{Proposed Optimal Algorithm for $\mathrm{(P2)}$}
\label{algo: given client selection}
\SetAlgoLined
\KwIn{$\setK$, $\{h_k\}_k$, $\{b_k\}_k$, $\sigma_n^2$, $\eta_D$, $P_{\max}$.}
\textbf{Preprocessing:}
Check the strict feasibility of the problem. Check if \Cref{lemma: res_lambda_zero} solution is already feasible. \textbf{Return} if it is the case.
\\
Compute all $\alpha$-interval boundaries to obtain~$\alpha_n$.\;
\For{$n = 0, \ldots, K(K-1)/2$}{
    Determine permutation $\pi_n$ for $\alpha \in [\alpha_n, \alpha_{n+1})$.\;
    \For{$i = 1,\ldots,K-1$}{
        Construct $\overline{\setS}_{n,i}$ (\Cref{th: sets depending on n and i}).
        Compute $D_{n,i}$ as in \eqref{eq: D_ni}.\;
        \lIf{$D_{n,i} \leq 0$}{\textbf{continue}}
        Root finding algorithm to find $\lambda^*$ s.t.\ $\mathrm{LHS}_{n,i}(\lambda^*) = \mathrm{RHS}_{n,i}(\lambda^*)$; compute $\alpha^{(iii)}_{n,i}(\lambda^*)$.\;
        \lIf{Conditions~\eqref{eq: conditions to hold alpha lambda (crossing point)} hold}{\textbf{return} $\overline{\setS}_{n,i}$, $\lambda^*$, $\alpha^{(iii)}_{n,i}(\lambda^*)$}
    }
}
\KwOut{$\lambda^*$, $\alpha^*\defeq \alpha^{(iii)}_{n,i}(\lambda^*)$, $x_k^*(\lambda^*,\alpha^*)$.}
\end{algorithm}

\subsection{Robustness Analysis of \Cref{algo: given client selection}}
The root-finding step of~\Cref{algo: given client selection} has to be computed numerically and therefore inevitably introduces some error, although the error can be made arbitrarily small by increasing the number of iterations (using bisection or Newton's method~\cite{Book_numericalAnalysis}). Since the algorithm consists of identifying a feasible pair $(n,i)$ (a set $\overline{\setS}_{n,i}$) that satisfies the condition~\eqref{eq: conditions to hold alpha lambda (crossing point)}, no inherent solution continuity property exists. Therefore, it is important to establish its robustness to such numerical inaccuracies, i.e., a small perturbation of the optimal $\lambda^*$ does not lead to a totally different pair $(n,i)$, and a small perturbation of a non-optimal pair $(n',i')$ would not make it satisfy~\eqref{eq: conditions to hold alpha lambda (crossing point)}.

In Appendix~\ref{proof: sensitivity analysis}, we have established that the algorithm remains robust when the root-finding algorithm returns the root estimate $\hat{\lambda}$ such that $\mathrm{LHS}_{n,i}(\hat{\lambda}) \geq \mathrm{RHS}_{n,i}(\hat{\lambda})$.

\section{Simulation Results}
\subsection{General Simulation Settings}
The system parameters are set as follows: $K=10$, $P_{\max}=\qty{23}{\deci\belm}$, $\sigma_n^2=\sigma_d^2 = \qty[per-mode = symbol]{-80}{\deci\belm}$;
the number of receive antennas for sensing is $N_r=4$ and the RCS is~$\zeta=0.7$; the false-alarm probability and detection threshold are $p_{FA}=10^{-2}$ and $p_{D,\text{th}}= 0.99$, respectively; the signal duration for detection is $T=8$~\cite{Cheng_XuJie_CoordinatedTransmitBeamforming_ISAC_2024}.  The path loss exponent for communication is 2.7.

The FL task is handwritten digit classification on the MNIST dataset~\cite{lecun-mnist-2010} and color image classification on the CIFAR-10 dataset~\cite{cifar}, with data distributed i.i.d. and non-i.i.d. (each \ac{ED} holds data of two classes and the data possessed by each \ac{ED} follows a power law distribution) among the \acp{ED}.
A three-layer fully connected neural network serves as the learning model for MNIST dataset, and ResNet-20~\cite{resnet_2016} for CIFAR-10.

\subsection{Optimization Evaluation}

To validate the optimization performance under different system topology scenarios, ED and target placements are generated as follows: 3 \acp{ED} are randomly drawn within a \qty{30}{\meter} radius disc of the target to ensure feasibility of the detection constraint, the remaining \ac{ED} are uniformly sampled within a disc of radius \qty{300}{\meter}. The target distance takes the values \qtylist{100;200;300;400;500}{\meter}. For each target location, 100 random realizations of feasible \ac{ED} positions are generated, and for each feasible position realization, 10 independent small-scale fading draws are conducted.

\begin{table*}[t]
    \centering
    \caption{Relative optimality gap distribution across different methods with respect to results from Algorithm~1.}
    \label{tab:relative_error_distribution}
    \setlength{\tabcolsep}{6pt}
    \renewcommand{\arraystretch}{1.1}
    \begin{tabular}{l c c c c c c c c}
        \toprule
        Method & $<\qty{0}{\percent}$ & $<\qty{0.01}{\percent}$ & $<\qty{0.1}{\percent}$ & $<\qty{1}{\percent}$ & $<\qty{10}{\percent}$ & $<\qty{100}{\percent}$ & $<\qty{1000}{\percent}$ & $\geq \qty{1000}{\percent}$ \\
        \midrule
        IPOPT (P1) with Scaling & \textbf{\qty{0.000}{\percent}} & \qty{0.990}{\percent} & \qty{18.826}{\percent} & \qty{24.197}{\percent} & \qty{27.989}{\percent} & \qty{44.780}{\percent} & \qty{75.390}{\percent} & \qty{24.610}{\percent} \\
        IPOPT (P2) w/o Scaling & \textbf{\qty{0.000}{\percent}} & \qty{6.660}{\percent} & \qty{9.970}{\percent} & \qty{16.078}{\percent} & \qty{27.365}{\percent} & \qty{49.888}{\percent} & \qty{94.845}{\percent} & \qty{5.155}{\percent} \\
        IPOPT (P2) with Scaling & \textbf{\qty{0.000}{\percent}} & \qty{99.973}{\percent} & \qty{99.977}{\percent} & \qty{99.980}{\percent} & \qty{99.988}{\percent} & \qty{99.995}{\percent} & \qty{99.999}{\percent} & \qty{0.001}{\percent} \\
        \bottomrule
    \end{tabular}
\end{table*}

We start by validating the optimality of the proposed method for solving $\mathrm{(P2)}$ (Algorithm~1).
The proposed method is compared with the following baselines:
\begin{itemize}[leftmargin=11pt]
    \item \textit{Solver w/o Scaling}: applying a standard off-the-shelf \ac{NLP} solver, IPOPT~\cite{ipopt_2006}, directly to the convex optimization problem $\mathrm{(P2)}$ without any tailored scaling. In the cases where the solver does not return a feasible point, a feasible heuristic solution (later specified as \textit{greedy Sensing Power}) is used instead.
    \item \textit{Solver with Scaling}: applying the same off-the-shelf solver to $\mathrm{(P2)}$ with problem-specific scaling (variable transformation $y_k = h_k^2x_k$) to mitigate the ill-conditioning of the problem.
    \item \textit{Solver for $\mathrm{(P1)}$ with Scaling}: applying the same off-the-shelf solver to the original non-convex \ac{NLP} problem $\mathrm{(P1)}$ with the same scaling as above, in order to illustrate the importance of the variable transformation that makes the optimization problem convex~\eqref{eq: variable transformation pk xk}.
\end{itemize}

The relative performance of different power allocation methods compared to the optimal solution from Algorithm~1 is shown in Table~\ref{tab:relative_error_distribution}. We first observe that the relative gap is never negative, confirming that the baselines never outperform the proposed algorithm and suggesting that the proposed algorithm attains the optimum with high numerical precision. 
After correctly scaling $\mathrm{(P2)}$, \qty{99.973}{\percent} of the solutions are numerically identical ($<\qty{0.01}{\percent}$) between the proposed method and the off-the-shelf solver benchmark, however rare larger gaps still occur for this case.
By contrast, the corresponding benchmark without scaling on $\mathrm{(P2)}$ often cannot reach the optimum, and applying the same scaling to the non-convex formulation $\mathrm{(P1)}$ also leads to inferior results.
These results highlight the importance of the proposed convex variable transformation~\eqref{eq: variable transformation pk xk} and the numerical stability of the proposed analytically-characterized algorithm: off-the-shelf solvers can be sensitive to formulation and scaling, whereas the proposed algorithm achieves the optimum without requiring such tuning.

\begin{figure}[t]
    \centering
    
    \subfloat[Average run time\label{fig: average time}]{\includegraphics[width=0.49\linewidth]{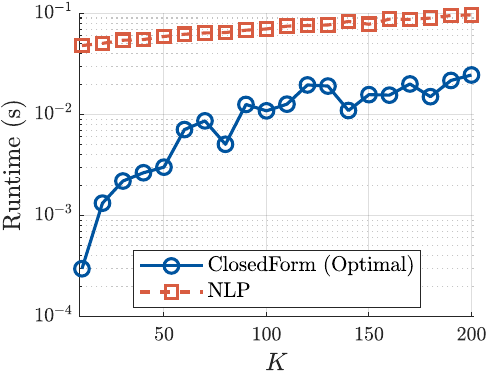}}
    \hfill
    \subfloat[Runtime ratio histogram\label{fig: time distribution}]{\includegraphics[width=0.49\linewidth]{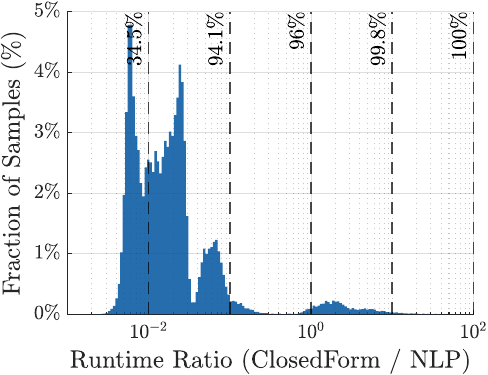}}
    \caption{Run time comparison between off-the-shelf NLP solver and proposed optimal solver.}
    \label{fig: time_gap_NLP}
\end{figure}
We next compare the runtime of the proposed method with the off-the-shelf solver (IPOPT with scaling on $\mathrm{(P2)}$). The average runtime across all scenarios is shown in Fig.~\ref{fig: average time}, and the distribution of runtime ratios (proposed method time divided by off-the-shelf solver time) is shown in Fig.~\ref{fig: time distribution}. The proposed method achieves a significant reduction in runtime, with an average runtime of about \qty{0.001}{\second} compared to about \qty{0.01}{\second} for the off-the-shelf solver, corresponding to a 10$\times$ speedup. Due to the potentially large number of $\alpha$-interval searches (although still polynomial complexity), the runtime may vary. The runtime ratio histogram shows that in \qty{34.5}{\percent} of cases, the proposed method achieves a 100$\times$ speedup, in \qty{94.1}{\percent} of cases it achieves at least a 10$\times$ speedup, and in \qty{96}{\percent} of cases it is at least faster as the off-the-shelf solver. However, in rare cases (\qty{4}{\percent}), the proposed method can be slower than the off-the-shelf solver; in most of these cases it is not much slower, but in \qty{0.2}{\percent} of cases it may be more than 10$\times$ slower.

\subsection{MSE Performance}
Having verified the optimality of the proposed algorithm, we compare the following baselines with the optimal power control in terms of the achieved MSE at the BS:
\begin{itemize}[leftmargin=11pt]
    \item \textit{Zero-forcing (ZF) scaling}: determine the receive scaling factor $\alpha$ large enough that ZF can be applied for $p_k$ and then compute the associated ZF power control.
    \item \textit{Greedy Sensing Power} (Greedy SP): let the EDs with the largest $b_k$ transmit at $P_{\max}$ until the sensing constraint is satisfied, then perform minimum-MSE power control without the sensing constraint for the remaining EDs.
    \item \textit{No sensing}: solve the MSE minimization without enforcing the sensing constraint, which provides an optimistic lower bound on the achievable performance.
\end{itemize} 

To evaluate the effects of the target location and the relative location of users with respect to the target, we consider a scenario with two \ac{ED} groups as in \Cref{fig: illustration_user_groups}. Each \ac{ED} group is confined within a width of \qty{1}{\meter}. The first \ac{ED} group is located \qty{50}{\meter} away from the BS, the second \ac{ED} group is located at $d_{2nd}$, and the target is located at $d_{t}$. The distance between the target location and the second \ac{ED} group is therefore $d_{2nd} - d_t$.
\begin{figure}[t]
    \centering
    \includegraphics[width=0.55\linewidth]{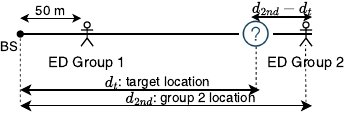}
    \caption{Illustration of the two-user-group scenario.}
    \label{fig: illustration_user_groups}
\end{figure}

We evaluate the MSE performance of different power control methods by varying the second \ac{ED} group location $d_{2nd}$ and the target distance to the second \ac{ED} group $d_{2nd} - d_t$. The average results over $40$ random Rayleigh fading realizations are shown in \Cref{fig: MSE vs second_location}. First, the MSE of the optimal design is consistently the lowest across all scenarios, confirming the optimality of the proposed method. The gap relative to \emph{No Sensing} is small for very near-target and very far-target scenarios.

The MSE of ZF is particularly sensitive to the relative position between the target and the second \ac{ED} group; it reaches more than $10^2$ when $d_{2nd} - d_t = \qty{12}{\meter}$. This is because ZF depends on the weakest channel strength and is therefore sensitive to deep fades, especially when there is little room for optimization, i.e., here for $d_{2nd} - d_t = \qty{12}{\meter}$, where the second \ac{ED} group's power almost have to be $P_{\max}$ to guaranteeing system feasibility. The greedy sensing power control, on the other hand, performs with small gaps to the optimal one in most cases except for the near-target scenario (e.g., at $d_{2nd} = \qty{70}{\meter}$ and $d_{2nd}-d_t=\qty{10}{\meter}$), where the sensing constraint is loose and there is more room for optimization.
\begin{figure}[t]
    \centering
    \includegraphics[width=0.67\linewidth]{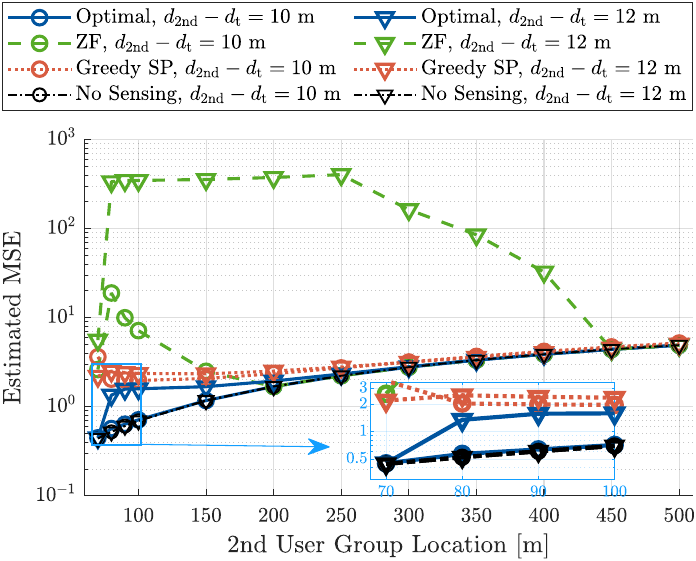}
    \caption{Evaluation of MSE vs. second \ac{ED} group location with different target distances to the second \ac{ED} group.}
    \label{fig: MSE vs second_location}
\end{figure}

\subsection{AirComp-FL Performance}

The end-to-end performance and the MSE of AirComp-FL are evaluated in terms of test accuracy and aggregation MSE across two datasets (MNIST and CIFAR-10) and two data distributions specified in section IV. A. (i.i.d. and non-i.i.d.) in Figs.~\ref{fig: mnist 70}-\ref{fig: cifar 300}.
Two network geometries are evaluated: a near-target scenario in which the second \ac{ED} group is located at $d=\qty{70}{\meter}$ from the BS with the sensing target \qty{10}{\meter} from the second \ac{ED} group (where greedy SP performs poorly), and a far-target scenario at $d=\qty{300}{\meter}$ with the target \qty{12}{\meter} from the second \ac{ED} group (where ZF performs especially poorly in terms of MSE).

In all configurations, the proposed optimal power control closely aligns with the no-sensing lower bound in test accuracy. This shows that \emph{\ac{Sig-ISCC} in FL can achieve similar performance to the case without a sensing task, showing that the sensing constraint effect can be effectively mitigated under optimal joint power allocation and receive scaling}.
The aggregation MSE of the optimal design is consistently the lowest among all methods. Greedy SP achieves competitive accuracy under i.i.d. data (Fig.~\ref{fig: mnist iid 70} and Fig.~\ref{fig: cifar iid 70}) but exhibits slower convergence or lower convergence accuracy in the non-i.i.d. setting (Fig.~\ref{fig: mnist niid 70} and Fig.~\ref{fig: cifar niid 70}). ZF is particularly sensitive to deep fade channel realizations, with MSE spikes spanning several orders of magnitude and abrupt accuracy degradations throughout training in the case of Fig.~\ref{fig: mnist 300} and Fig.~\ref{fig: cifar 300}. 

\begin{figure}[t]
    \centering    
    \subfloat[MNIST i.i.d.\label{fig: mnist iid 70}]{
        \begin{minipage}{0.48\linewidth}
            \centering
            \includegraphics[width=\linewidth]{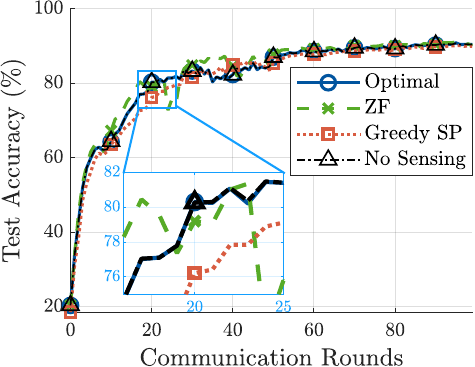} \\
            \includegraphics[width=\linewidth, trim=3 0 0 0]{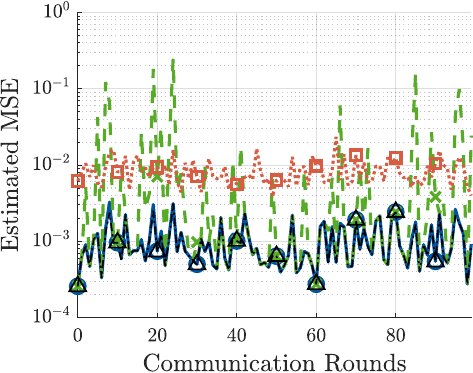}
        \end{minipage}
    }
    \subfloat[MNIST non-i.i.d.\label{fig: mnist niid 70}]{
        \begin{minipage}{0.48\linewidth}
            \centering
            \includegraphics[width=\linewidth]{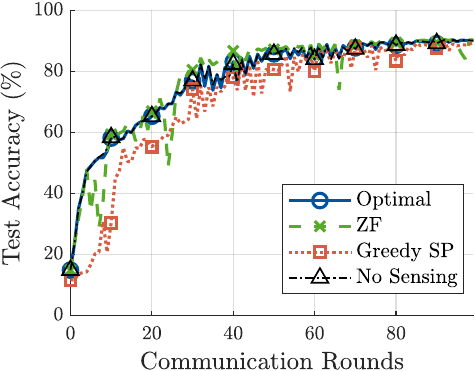} \\
            \includegraphics[width=\linewidth, trim=3 0 0 0]{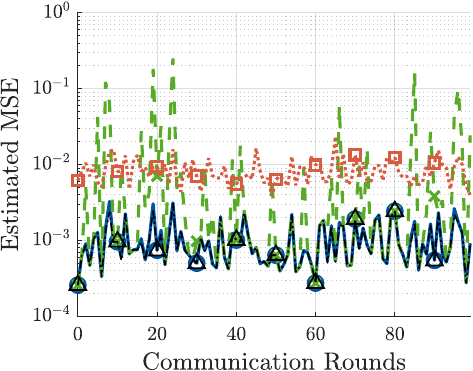}
        \end{minipage}
    }
    \caption{Test accuracy (top) and aggregation MSE (bottom) on MNIST, with the settings ($d_{\mathrm{2nd}}=\qty{70}{\meter}$ and $d_{\mathrm{2nd}} - d_t = \qty{10}{\meter}$).}
    \label{fig: mnist 70}
\end{figure}

\begin{figure}[t]
    \centering
    \subfloat[MNIST i.i.d.\label{fig: mnist iid 300}]{
        \begin{minipage}{0.48\linewidth}
            \centering
            \includegraphics[width=\linewidth]{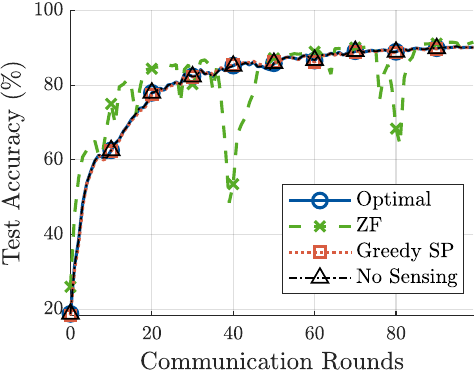} \\
            \includegraphics[width=\linewidth, trim=3 0 0 0 ]{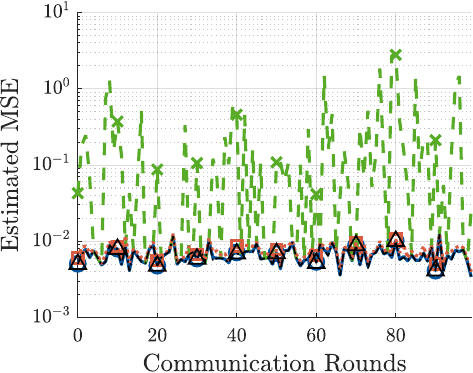}
        \end{minipage}
    }
    \subfloat[MNIST non-i.i.d.\label{fig: mnist niid 300}]{
        \begin{minipage}{0.48\linewidth}
            \centering
            \includegraphics[width=\linewidth]{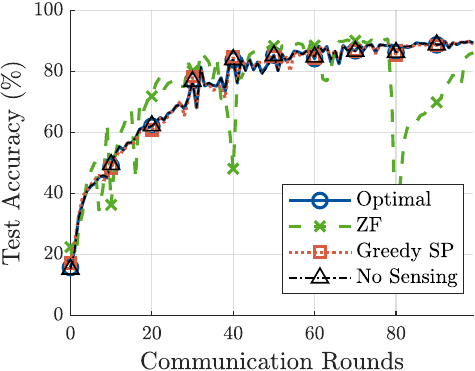} \\
            \includegraphics[width=\linewidth, trim=3 0 0 0]{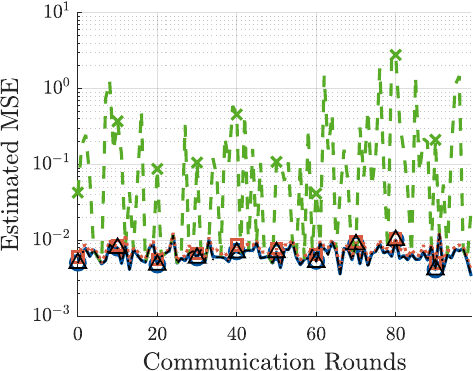}
        \end{minipage}
    }
    \caption{Test accuracy (top) and aggregation MSE (bottom) on MNIST, with the settings ($d_{\mathrm{2nd}}=\qty{300}{\meter}$ and $d_{\mathrm{2nd}} - d_t = \qty{12}{\meter}$).}
    \label{fig: mnist 300}
\end{figure}

\begin{figure}[t]
    \centering
    \subfloat[CIFAR i.i.d.\label{fig: cifar iid 70}]{
        \begin{minipage}{0.48\linewidth}
            \centering
            \includegraphics[width=\linewidth]{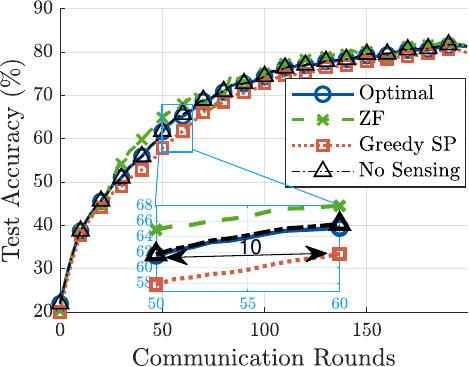} \\
            \includegraphics[width=\linewidth, trim=4 0 0 0]{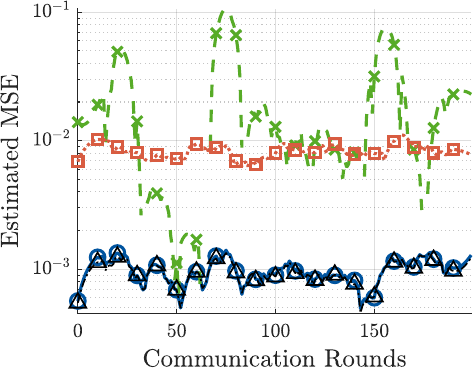}
        \end{minipage}
    }
    \subfloat[CIFAR non-i.i.d.\label{fig: cifar niid 70}]{
        \begin{minipage}{0.48\linewidth}
            \centering
            \includegraphics[width=\linewidth]{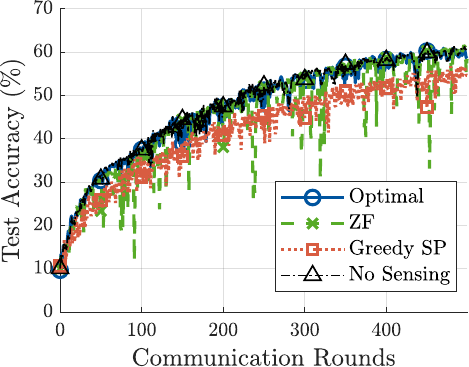} \\
            \includegraphics[width=\linewidth, trim=4 0 0 0]{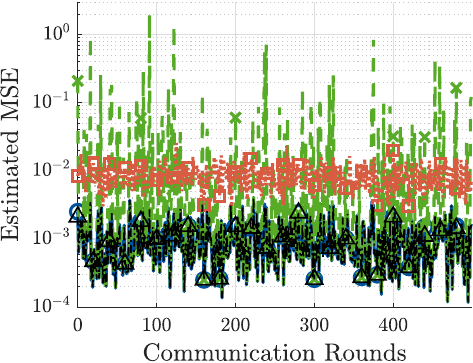}
        \end{minipage}
    }
    \caption{Test accuracy (top) and aggregation MSE (bottom) on CIFAR-10, with the settings ($d_{\mathrm{2nd}}=\qty{70}{\meter}$ and $d_{\mathrm{2nd}} - d_t = \qty{10}{\meter}$).}
    \label{fig: cifar 70}
\end{figure}

\begin{figure}[t]
    \centering
    \subfloat[CIFAR i.i.d.\label{fig: cifar iid 300}]{
        \begin{minipage}{0.48\linewidth}
            \centering
            \includegraphics[width=\linewidth]{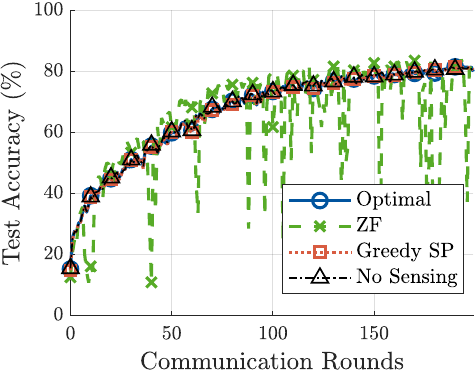} \\
            \includegraphics[width=\linewidth, trim=4 0 0 0]{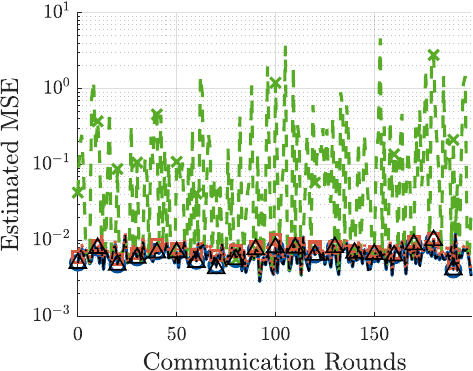}
        \end{minipage}
    }
    \subfloat[CIFAR non-i.i.d.\label{fig: cifar niid 300}]{
        \begin{minipage}{0.48\linewidth}
            \centering
            \includegraphics[width=\linewidth]{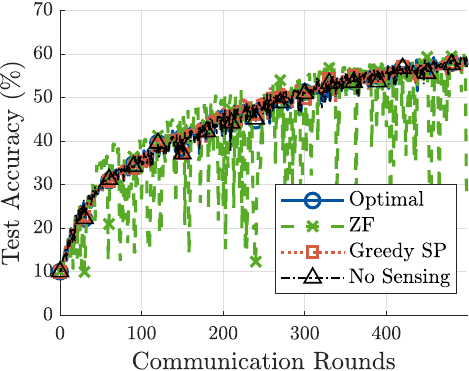} \\
            \includegraphics[width=\linewidth, trim=4 0 0 0]{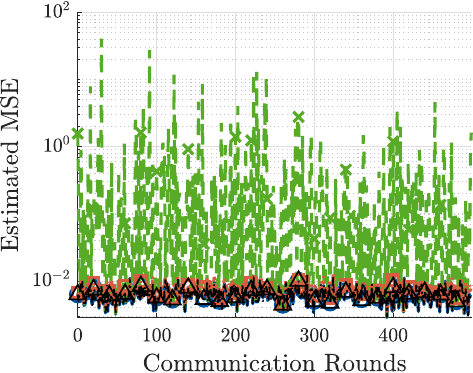}
        \end{minipage}
    }
    \caption{Test accuracy (top) and aggregation MSE (bottom) on CIFAR-10, with the settings ($d_{\mathrm{2nd}}=\qty{300}{\meter}$ and $d_{\mathrm{2nd}} - d_t = \qty{12}{\meter}$).}
    \label{fig: cifar 300}
\end{figure}

\section{Conclusion}
This work investigated the power-control problem of AirFEEL in an uplink Sig-ISCC system subject to a joint target-detection constraint. We formulated the joint transmit-power and receive-scaling design and proposed a variable transformation that transforms the original non-convex problem into an equivalent convex reformulation.
Based on extensive optimality analysis, we developed an analytically characterized polynomial-time optimization algorithm based on root-finding of monotonic differentiable functions. Robustness to root-finding errors has also been proven. Simulation results validate both the optimality and numerical robustness of the proposed design relative to a popular off-the-shelf solver. In addition, the optimal power allocation for \ac{Sig-ISCC} FL preserves FL performance despite the sensing constraint, whereas other non-optimal power allocation frameworks fail to do so.


\bibliographystyle{IEEEtran}
\bibliography{biblio}

\appendices

\section{Proof of \Cref{theorem: solution existence}}
\label{proof: theorem solution existence}

The proof proceeds in two lemmas: first establishing existence under the closed constraint $\alpha\geq 0$, then proving equivalence of the resulting problem with the problem $\alpha>0$.

\begin{lemma}[Existence of minimum under $\alpha\geq0$]
\label{theorem: solution existence in compact set}
When~\eqref{eq: feasible condition} holds, the optimum of the problem~$\mathrm{(P1)}$ with the constraint $\alpha>0$ replaced by $\alpha\geq 0$ exists.
\end{lemma}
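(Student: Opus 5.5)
The plan is a Weierstrass argument: restrict to a compact sublevel set and use continuity of the objective. Denote the objective of $\mathrm{(P1)}$ by
\[
f(\bp,\alpha)=\sum_{k\in\setK}(h_k\sqrt{p_k\alpha}-1)^2+\alpha\sigma_n^2 ,
\]
which equals the objective of $\mathrm{(P1)}$ up to the additive constant $K$. On $\R_{\geq 0}^K\times\R_{\geq 0}$, $f$ is continuous, since $\sqrt{\cdot}$ is continuous on $[0,\infty)$.

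Let $\mathcal{F}=\{(\bp,\alpha): \eqref{cons: sensing},\ \eqref{cons: power},\ \alpha\geq 0\}$. The constraints are closed: \eqref{cons: sensing} is a closed half-space, \eqref{cons: power} is a box, and $\alpha\geq0$ is closed. Hence $\mathcal{F}$ is closed. It is nonempty under \eqref{eq: feasible condition}, because $p_k=P_{\max}$ for all $k$ together with any $\alpha\ge 0$ (e.g. $\alpha=0$) is feasible. The only unbounded direction is $\alpha$, so $\mathcal{F}$ itself is not compact.

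To handle this, take the feasible reference point $(\bp^0,0)$ with $p_k^0=P_{\max}$, where $f(\bp^0,0)=K$. Consider the sublevel set $\mathcal{L}=\{(\bp,\alpha)\in\mathcal{F}: f(\bp,\alpha)\leq K\}$. Every summand of $f$ is nonnegative, so any point of $\mathcal{L}$ satisfies $\alpha\sigma_n^2\leq f\leq K$, i.e. $\alpha\leq K/\sigma_n^2$; this uses $\sigma_n^2>0$. Thus $\mathcal{L}$ lies inside $[0,P_{\max}]^K\times[0,K/\sigma_n^2]$, so it is bounded. It is also closed, as the intersection of $\mathcal{F}$ with a sublevel set of a continuous function. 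Hence $\mathcal{L}$ is compact and nonempty.

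By Weierstrass, $f$ attains a minimum on $\mathcal{L}$. Any feasible point outside $\mathcal{L}$ has value $>K\geq\min_{\mathcal{L}}f$, so this minimizer is a global minimizer over $\mathcal{F}$, which proves the lemma. The only delicate step is the coercivity in $\alpha$, which is where $\sigma_n^2>0$ enters; everything else is routine. For the subsequent equivalence with $\alpha>0$, one would show that $\alpha=0$ is never optimal: at $\alpha=0$ the value is $K$, while for small $\alpha>0$ with $p_k=P_{\max}$ the derivative in $\alpha$ at $0^+$ tends to $-\infty$ because of the $-2h_k\sqrt{\alpha P_{\max}}$ terms. That argument belongs to the next lemma, however.
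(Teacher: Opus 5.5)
Your proof is correct. It uses the same overall strategy as the paper: confine $\alpha$ to a bounded interval and then apply Weierstrass on a compact set. The difference lies in how you obtain that confinement.

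The paper first relaxes the sensing constraint. For $\alpha\ge\max\{1/(P_{\max}\min_k h_k^2),\,K/\sigma_n^2\}$ it solves the relaxed problem by zero-forcing, which gives the lower bound $\alpha\sigma_n^2-K\ge 0$ on the $\mathrm{(P1)}$ objective. It then exhibits a feasible point ($p_k=P_{\max}$, small $\alpha>0$) with strictly negative objective. Hence the minimum over the box $\alpha\le\max\{\cdot\}$ is negative and is the global minimum.

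You instead drop the nonnegative squares, which gives $f\ge\alpha\sigma_n^2$ for every $\alpha$ and hence $\alpha\le K/\sigma_n^2$ on the sublevel set. This needs no zero-forcing step and no threshold $1/(P_{\max}\min_k h_k^2)$. Because your sublevel set is closed, a non-strict comparison with the value $K$ at the feasible point $p_k=P_{\max}$, $\alpha=0$ is enough, so you never need a strictly better point.

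The paper's longer route has one payoff. The strictly negative feasible point it constructs is exactly what the next lemma (that $\alpha=0$ is never optimal) reuses, and you rightly defer that step.
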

\begin{proof}
    
We prove this by showing that the optimum has to lie within a compact set; since the objective function is continuous, a solution exists.
Denote the problem without the sum constraint and restricting $\alpha$ to be greater than a constant value:
\begin{subequations}
\label{pb: for prove existence}
\begin{align}
\min_{ \{p_k\}_{k\in\setK},\alpha}\quad  &
\alpha(\sigma_n^2+\sum_{k\in\setK}h_k^2 p_k) - 2\sum_{k\in\setK}h_k\sqrt{\alpha p_k} \\
\mathrm{s.t.} \quad\quad &   0 \leq p_k \leq P_{\max}, \  \forall k \in \setK \label{}\\
& \alpha \geq \max\Big\{\frac{1}{P_{\max} \min_{k\in\setK}h_k^2},\  \frac{|\setK|}{\sigma_n^2}\Big\}. \label{cons: alpha minimum}
\end{align}
\end{subequations}
In this case, for a given $\alpha$, the optimal $(p_k)_{k\in\setK}$ can be obtained via zero-forcing, i.e., by setting $(h_k\sqrt{\alpha p_k} - 1)^2 = 0$. The resulting optimal objective value can be written as
\begin{equation}
    \alpha\sigma_n^2 - |\setK| \geq 0.
\end{equation}
Given the feasible region of~$\alpha$ in~\eqref{cons: alpha minimum}, the optimum of~\eqref{pb: for prove existence} denoted by~$x^*$ is therefore greater than zero.

Now adding the sensing constraint, the problem becomes $\mathrm{(P1)}$ with the additional constraint~\eqref{cons: alpha minimum}. The resulting optimum $y^*$ satisfies $y^*\geq x^* > 0$, since the feasible set is further restricted by the sensing sum-power constraint.

Next, consider $\mathrm{(P1)}$ with the box constraint $\alpha \in [0,\max\Big\{\frac{1}{P_{\max} \min_{k\in\setK}h_k^2},\  \frac{|\setK|}{\sigma_n^2}\Big\}]$. An optimum clearly exists since $(p_k)_{k\in\setK}$ and $\alpha$ lie within a compact set and the objective function is continuous. Moreover, the optimal value is less than zero. For $\alpha>0$,
\begin{equation}
\begin{aligned}
& \alpha[  \sigma_n^2+\sum_{k\in\setK}h_k^2p_k ] - 2\sqrt{\alpha}(\sum_{k\in\setK}h_k\sqrt{p_k})  < 0
\\
\iff & \sqrt{\alpha}< \frac{2\sum_{k\in\setK}h_k\sqrt{p_k}}{\sigma_n^2+\sum_{k\in\setK}h_k^2p_k}.
\end{aligned}
\label{eq: alpha better zero}
\end{equation}

Such an $\alpha>0$ exists whenever $\sum_{k\in\setK}h_k\sqrt{p_k}>0$. Setting $p_k=P_{\max}$ for all $k\in\setK$ and choosing $\alpha>0$ satisfying~\eqref{eq: alpha better zero} yields a feasible point (otherwise the problem itself is infeasible) that achieves an objective value strictly smaller than zero.

Combining both cases, optimal~$\alpha$ of~$\mathrm{(P1)}$  has to be within the box constraint. Consequently, the optimum of the original problem lies in a compact set, and the existence of a minimizer is guaranteed. 
\end{proof}

\begin{lemma}[Equivalence $\mathrm{(P1)}$ with $\alpha>0$]
 The problem~$\mathrm{(P1)}$ is equivalent to the problem with $\alpha\geq 0$, i.e. any feasible point~$(\overline{p_k}, \overline{\alpha})$ with $\overline{\alpha}=0$ is not optimal.
 \label{theorem: alpha non zero}
\end{lemma}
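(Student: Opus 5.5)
The plan is to compare objective values. At any feasible point with $\overline{\alpha}=0$, the objective of $\mathrm{(P1)}$ is
\begin{equation*}
\overline{\alpha}\Big(\sigma_n^2+\sum_{k\in\setK}h_k^2\overline{p_k}\Big)-2\sum_{k\in\setK}h_k\sqrt{\overline{\alpha}\,\overline{p_k}}=0 .
\end{equation*}
This holds regardless of $(\overline{p_k})_k$. So it suffices to exhibit one feasible point with $\alpha>0$ whose objective is strictly negative. Such a point rules out optimality of every $\alpha=0$ point. It also shows that the minimizer given by \Cref{theorem: solution existence in compact set} (for the problem with $\alpha\geq 0$) has $\alpha^*>0$. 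That minimizer is therefore feasible for $\mathrm{(P1)}$, and the two problems share the same optimal value and optimal solutions.

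To build the negative-value point, take $p_k=P_{\max}$ for all $k\in\setK$. This satisfies the power constraint, and under \eqref{eq: feasible condition} it also satisfies the sensing constraint \eqref{cons: sensing}, since $\sum_k b_kP_{\max}\geq\eta_D$. Note that the sensing constraint does not involve $\alpha$, so $\alpha$ can be chosen freely. By the equivalence \eqref{eq: alpha better zero}, any
\begin{equation*}
0<\sqrt{\alpha}<\frac{2\sqrt{P_{\max}}\sum_{k\in\setK}h_k}{\sigma_n^2+P_{\max}\sum_{k\in\setK}h_k^2}
\end{equation*}
gives a strictly negative objective. This interval is nonempty because $h_k>0$ (channel gains are positive), so $\sum_k h_k\sqrt{P_{\max}}>0$.

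Finally, the objective value of this point is strictly below $0$. Hence no feasible point with $\overline{\alpha}=0$ can be optimal, the optimum over $\alpha\geq0$ is attained at some $\alpha^*>0$, and it coincides with the optimum of $\mathrm{(P1)}$.

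There is no real obstacle here. The only point needing care is positivity of $\sum_k h_k\sqrt{p_k}$, which I would justify by $h_k>0$ and $P_{\max}>0$. The feasibility of the all-$P_{\max}$ power vector follows from the standing condition \eqref{eq: feasible condition}.
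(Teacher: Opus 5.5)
Your proposal is correct and follows the paper's proof: the objective is identically $0$ at $\overline{\alpha}=0$, and the all-$P_{\max}$ power vector with a sufficiently small $\alpha>0$ satisfying \eqref{eq: alpha better zero} is feasible and gives a strictly negative value. You spell out a few points the paper leaves implicit: the explicit $\alpha$-interval, positivity of $\sum_k h_k\sqrt{P_{\max}}$, and the fact that the minimizer from \Cref{theorem: solution existence in compact set} therefore has $\alpha^*>0$.
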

\begin{proof}
We now prove that any feasible point with $\overline{\alpha}=0$ is not optimal, thereby establishing that problem~$\mathrm{(P1)}$ with $\alpha>0$ is equivalent to the problem with $\alpha\geq 0$.

When $\overline{\alpha} = 0$, the objective value is 0 for any feasible $\overline{p}_k$.

With the same argument as the previous lemma, there exists a feasible point where $\alpha>0$ and the attained objective is strictly smaller than zero. 
Therefore, no point with $\overline{\alpha}=0$ is optimal.
\end{proof}

Combining both lemmas, we conclude that problem~$\mathrm{(P1)}$ is equivalent to its formulation over the closed convex set with $\alpha\geq 0$, is well-defined, and the optimum is attained.

\section{Proof of \Cref{lemma: p positive}}
\label{proof: lemma p positive}

Suppose, for the sake of contradiction, that there exists an optimum $(\bx^*, \alpha^*)$ with $x_{k_0}=0$ for some $k_0\in\setK$.
Note that for any $k\in\setK$,
\begin{equation}
h_k^2x_k -2h_k \sqrt{x_k} <0 
 \iff  \ x_k< \frac{4 }{h_k^2} .
\end{equation}

Construct an alternative point $(\overline{\bx}^*, \alpha^*)$ by setting $\overline{x}_{k_0}=\min\{1/h_{k_0}^2,P_{\max}\alpha^*\}< \frac{4}{h_{k_0}^2}$ and $\overline{\bx}^* = (x_1,\ldots,x_{k_0-1}, \overline{x}_{k_0},x_{k_0+1}\ldots, x_K)$. This point is clearly feasible, and the objective is strictly improved by
\begin{equation}
h_{k_0}^2\overline{x}_{k_0} - 2h_{k_0}\sqrt{\overline{x}_{k_0}} < 0.
\end{equation}
Hence, $(\bx^*, \alpha^*)$ cannot be optimal, which contradicts the initial assumption.

\section{Proof of \Cref{th: equivalent_KKT_conditions}}
\label{proof: theorem equivalent KKT conditions}

Part~i) follows from \Cref{theorem: closed form solution of x_k}, and part~ii) is the primal constraint. For iii), taking the partial derivative with respect to $\alpha$ yields:
\begin{equation}
\sigma_n^2+\lambda \eta_D  =  P_{\max}\sum_k \rho_k + \nu.
\end{equation}
By~\Cref{theorem: alpha non zero}, $\alpha^*>0$ therefore $\nu = 0$.
\begin{equation}
    \sigma_n^2+\lambda \eta_D  =  P_{\max}\sum_k \rho_k.
\end{equation}
First, there must exist at least one $\rho_k>0$, which implies that there always exist \acp{ED} transmitting at $x_k^* = P_{\max}\alpha^*$.

We now determine the values of $\rho_k$, noting that $\rho_k$ can be non-zero only when $x_k^* = P_{\max} \alpha^*$.

For $k\in\setK$, from~\eqref{KKT: grad_xk_Sc}, we have:
\begin{equation}
\rho_k = \begin{cases}
    \lambda b_k - h_k^2 + \frac{h_k }{\sqrt{P_{\max}\alpha^*}} & \text{ if }  h_k^2\leq \lambda b_k,\\
    \left(\frac{h_k }{\sqrt{P_{\max}\alpha^*}} - (h_k^2-\lambda b_k)\right)^+ & \text{ if } h_k^2 > \lambda b_k.
\end{cases}
\end{equation}
In short, it can be written as
\begin{equation}
    \rho_k  =  \left(\frac{h_k }{\sqrt{P_{\max}\alpha^*}} - (h_k^2-\lambda b_k)\right)^+.
\end{equation}
We can conclude iii).

\section{Proof of \Cref{th: lambda final condition}}
\label{proof: theorem lambda final condition}

If $\lambda>0$, then equality must hold: $ 
    \sum_{k\in\setK} b_kx_k^* = \eta_D \alpha^*.$
\acp{ED} in $\overline{\setS}^{(\lambda, \alpha)}$ are the exact ones that are transmitting at $P_{\max}\alpha^*$. Others are those $k\in\setK\backslash\overline{\setS}^{(\lambda, \alpha)}$ transmitting at interior power.
We obtain therefore:
\begin{multline}
\sum_{k\in \setK\backslash\overline{\setS}^{(\lambda, \alpha)}}b_k \frac{h_k^2 }{(h_k^2-\lambda b_k)^2}
\\
= \alpha^* ( \eta_D - \sum_{k\in\overline{\setS}^{(\lambda, \alpha)}} b_kP_{\max})
\defeq \alpha^* D^{(\lambda, \alpha)}.
\end{multline}

This requires the right-hand side, denoted by $D^{(\lambda, \alpha)}$, to be positive, meaning that the aggregate sensing contribution of the \acp{ED} transmitting at $P_{\max}$ must not exceed the detection threshold.

Replacing the expression of $\alpha^*$ with respect to $\lambda$:
\begin{align*}
\sum_{k\in \setK\backslash\overline{\setS}^{(\lambda, \alpha)}} \hspace{-.35cm}b_k \frac{h_k^2 }{(h_k^2-\lambda b_k)^2}
&\!=\!\! \frac{D^{(\lambda, \alpha)} \Big(\sum\limits_{k\in\overline{\setS}^{(\lambda, \alpha)}}h_k \Big)^2  }{P_{\!\max}\Big(\frac{\sigma_n^2 \!+\! \lambda \eta_D}{P_{\max}} \!\!+ \hspace{-.41cm} \sum\limits_{k\in\overline{\setS}^{(\lambda, \alpha)}}\hspace{-.37cm} (h_k^2 \!-\! \lambda b_k)\Big)^2}\\
&=\frac{\Big(\sum\limits_{k\in\overline{\setS}^{(\lambda, \alpha)}}h_k \Big)^2 D^{(\lambda, \alpha)}}{P_{\!\max}\Big(\frac{\sigma_n^2 + \lambda D^{(\lambda, \alpha)}}{P_{\max}}\! +\hspace{-.1cm} \sum_{k\in\overline{\setS}^{(\lambda, \alpha)}} \hspace{-.1cm} h_k^2 \Big)^2}.\\[-.9cm]
& 
\end{align*}

\section{Proof of \Cref{th: alpha ranges}}
\label{proof: theorem alpha ranges}

For $k\in\setK$, the function $\alpha\mapsto\iota_k(\alpha)=\frac{h_k^2}{b_k}(1-\frac{1}{h_k\sqrt{\alpha P_{\max}}})$ is continuous. Therefore, given any ranking of $\{\iota_k\}_{k\in\setK}$, for the ranking to change, it must satisfy
\begin{equation}
\frac{h_k^2}{b_k}(1-\frac{1}{h_k\! \sqrt{\! \alpha P_{\max}}}) = \frac{h_j^2}{b_j}(1\!-\!\frac{1}{h_j\!\sqrt{\!\alpha\! P_{\max}}}), \quad j\in\setK, j\neq k.
\end{equation}
We obtain,
\begin{equation}
\frac{h_k^2}{b_k} - \frac{h_j^2}{b_j} = \frac{1}{\sqrt{\! \alpha P_{\max}}} \Big[\frac{h_k }{b_k} \!-\! \frac{h_j}{b_j}\Big], \quad j\in\setK, j\neq k.
\end{equation}
If $h_k^2/b_k = h_j^2/b_j$ for some $j\in\setK$, the relative ranking of $k$ and $j$ does not change with $\alpha$, since the right-hand side is either always positive or, if additionally $\frac{h_k }{b_k} = \frac{h_j}{b_j}$, identically zero, so the two \ac{ED}s share the same ranking regardless of $\alpha$.

Consequently, the transition points of the ranking can only arise among \ac{ED}s with distinct $h_k^2/b_k$ values and can be found by isolating the $\sqrt{\alpha P_{\max}}$ term.

There are at most $|\setK|(|\setK|- 1)/2$ transition points, yielding $|\setK|(|\setK|- 1)/2+1$ possible ranking combinations.

\section{Sensitivity Analysis of~\Cref{algo: given client selection}}
\label{proof: sensitivity analysis}

We show that \Cref{algo: given client selection} is robust to small numerical errors in the root-finding step $\mathrm{LHS}_{n,i}(\lambda)=\mathrm{RHS}_{n,i}(\lambda)$.
Given a pair $(n,i)$ with $D_{n,i}>0$ and denote the denominator in the square term of $\alpha_{n,i}^{(iii)}(\lambda)$ as
\begin{equation}
    q_{n,i}(\lambda)\defeq \frac{\sigma_n^2+\lambda D_{n,i}}{P_{\max}}+\sum_{k\in\overline{\setS}_{n,i}}h_k^2.
\end{equation}

\begin{lemma}[Affine boundary dependence on $\lambda$]
\label{lem: affine_boundary}
For any $j\in\setK$ and any $\lambda>0$ with $q_{n,i}(\lambda)>0$, define
$g_j(\lambda)\defeq \iota_j\!\big(\alpha^{(iii)}_{n,i}(\lambda)\big)$.
Then $g_j$ is affine and strictly decreasing in~$\lambda$:
\begin{equation}
    g_j(\lambda)
    = \frac{h_j^2}{b_j} - \frac{h_j}{b_j \sum_{k\in\overline{\setS}_{n,i}}h_k}\,q_{n,i}(\lambda),
    \label{eq: g_j_affine}
\end{equation}
and hence, for any $\varepsilon\in\mathbb{R}$,
\begin{equation}
    g_j(\lambda+\varepsilon)
    = g_j(\lambda) - C_{j,n,i}\,\varepsilon,
    \
    C_{j,n,i} \defeq \frac{h_j D_{n,i}}{b_j P_{\max} \sum_{k\in\overline{\setS}_{n,i}}h_k} > 0.
    \label{eq: g_j_shift}
\end{equation}
\end{lemma}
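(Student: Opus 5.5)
The plan is a direct substitution followed by a short algebraic simplification. Throughout I write $H\defeq\sum_{k\in\overline{\setS}_{n,i}}h_k$. Since $\overline{\setS}_{n,i}$ is nonempty and the channel gains are positive, $H>0$. By definition,
\[
\alpha^{(iii)}_{n,i}(\lambda)=\frac{1}{P_{\max}}\Big(\frac{H}{q_{n,i}(\lambda)}\Big)^2 .
\]
The key observation is that $\iota_j$ depends on $\alpha$ only through $1/\sqrt{\alpha P_{\max}}$.

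First, I would compute
\[
\sqrt{\alpha^{(iii)}_{n,i}(\lambda)\,P_{\max}}=\frac{H}{|q_{n,i}(\lambda)|}.
\]
The hypothesis $q_{n,i}(\lambda)>0$ removes the absolute value, so this equals $H/q_{n,i}(\lambda)$. This is the only place the positivity assumption on $q_{n,i}$ is used: without it, taking the square root would give $|q_{n,i}|$ and affinity would fail. The expression is also positive, so $\alpha^{(iii)}_{n,i}(\lambda)>0$ and $\iota_j$ is well defined there.

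Second, I would substitute into $\iota_j(\alpha)=\frac{h_j^2}{b_j}\big(1-\frac{1}{h_j\sqrt{\alpha P_{\max}}}\big)$. This gives
\[
g_j(\lambda)=\frac{h_j^2}{b_j}-\frac{h_j^2}{b_j}\cdot\frac{q_{n,i}(\lambda)}{h_j H}
=\frac{h_j^2}{b_j}-\frac{h_j}{b_j H}\,q_{n,i}(\lambda),
\]
which is exactly \eqref{eq: g_j_affine}.

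Third, the function $q_{n,i}(\lambda)=\frac{\sigma_n^2+\lambda D_{n,i}}{P_{\max}}+\sum_{k\in\overline{\setS}_{n,i}}h_k^2$ is affine in $\lambda$ with slope $D_{n,i}/P_{\max}$. Hence $g_j$, as an affine function of an affine function, is itself affine, with slope $-\frac{h_j D_{n,i}}{b_j P_{\max} H}=-C_{j,n,i}$.

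This yields \eqref{eq: g_j_shift} directly: $g_j(\lambda+\varepsilon)-g_j(\lambda)=-C_{j,n,i}\varepsilon$. If $\lambda+\varepsilon$ leaves the region where $q_{n,i}>0$, the formula still holds as an identity of the affine extension given by \eqref{eq: g_j_affine}.

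Strict positivity of $C_{j,n,i}$, and hence strict decrease of $g_j$, follows from $h_j,b_j,P_{\max},H>0$ together with the standing assumption $D_{n,i}>0$ for the pair $(n,i)$ under consideration.

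There is no real obstacle. The only points to check carefully are the sign handling in the square root, which is where $q_{n,i}>0$ enters, and the fact that $H>0$ because $\overline{\setS}_{n,i}\neq\emptyset$ (the rank index satisfies $i\geq 1$).
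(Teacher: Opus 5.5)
Your proof is correct and follows the paper's argument: compute $\sqrt{\alpha^{(iii)}_{n,i}(\lambda)P_{\max}}=\big(\sum_{k\in\overline{\setS}_{n,i}}h_k\big)/q_{n,i}(\lambda)$, substitute into $\iota_j$, and read off the slope $-C_{j,n,i}$ from $q_{n,i}$ being affine in $\lambda$ with slope $D_{n,i}/P_{\max}$. Your extra remarks make explicit what the paper leaves implicit: that the sum of gains is positive because $\overline{\setS}_{n,i}\neq\emptyset$, that $q_{n,i}>0$ is what removes the absolute value, and that strict decrease relies on the standing assumption $D_{n,i}>0$.
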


\begin{proof}
Since $\sqrt{\alpha^{(iii)}_{n,i}(\lambda)\,P_{\max}} = \Big( {\sum_{k\in\overline{\setS}_{n,i}}h_k}\Big) /{q_{n,i}(\lambda)}$, substituting into
$\iota_j(\alpha) = \frac{h_j^2}{b_j}\big(1-\frac{1}{h_j\sqrt{\alpha P_{\max}}}\big)$
gives~\eqref{eq: g_j_affine}. Since $q_{n,i}(\lambda)$ is affine in $\lambda$ with slope ${D_{n,i}}/{P_{\max}}>0$, the shift identity~\eqref{eq: g_j_shift} follows directly.
\end{proof}

\begin{proposition}[Numerical robustness of \Cref{algo: given client selection}]
\label{prop: stability_crossing_test}
Let $F_{n,i}(\lambda)\defeq \mathrm{LHS}_{n,i}(\lambda)-\mathrm{RHS}_{n,i}(\lambda)$ as in~\eqref{eq: conditions to hold alpha lambda (crossing point)}, and let $I\subset\mathbb{R}_+$ be an interval on which all denominators are strictly positive. 
Then $F_{n,i}$ is strictly increasing on~$I$, so its root $\lambda^*$ is unique. Bisection (or Newton's method) can return $\hat{\lambda}\geq\lambda^*$ (i.e., $\varepsilon\defeq\hat{\lambda}-\lambda^*\geq 0$). With sufficiently many iterations, $\varepsilon$ can be made arbitrarily small.
Set $\hat{\alpha}\defeq \alpha^{(iii)}_{n,i}(\hat{\lambda})$.
The three conditions in~\eqref{eq: conditions to hold alpha lambda (crossing point)} are preserved for $(\hat{\lambda},\hat{\alpha})$ in each of the following cases.

\begin{enumerate}[label=(\roman*)]
    \item \textbf{Interior.}\;
    If $\alpha^*\in(\alpha_n,\alpha_{n+1})$ and $\lambda^*\in\big(\iota_{\pi_n(i)}(\alpha^*),\iota_{\pi_n(i+1)}(\alpha^*)\big)$,
    then for sufficiently small $\varepsilon>0$, the conditions hold for $(\hat{\lambda},\hat{\alpha})$.

    \item \textbf{Boundary in $\boldsymbol{\alpha}$.}\;
    If $\alpha^*=\alpha_n$ or $\alpha^*=\alpha_{n+1}$, then with $\varepsilon$ sufficiently small, the conditions hold for $(\hat{\lambda},\hat{\alpha})$.

    \item \textbf{$\boldsymbol{\lambda}$-boundary.}\;
    If $\lambda^*=\iota_{\pi_n(i)}(\alpha^*)$, $\hat{\lambda}$ remains in the interval $[\iota_{\pi_n(i)}(\hat{\alpha}), \iota_{\pi_n(i+1)}(\hat{\alpha}))$ for $\varepsilon$ sufficiently small.

    \item \textbf{False-positive protection.}\;
    Suppose $(n,i)$ is not the optimal pair (i.e., one of the conditions does not hold) and let $\bar{\lambda}$ be the exact root of $F_{n,i}$. With sufficiently small $\varepsilon>0$, one of the conditions continues not to hold.

\end{enumerate}

In summary, given the algorithm output $\hat{\lambda}\geq\lambda^*$ with $\varepsilon=\hat{\lambda}-\lambda^*$ arbitrarily small, optimal active sets remain unchanged and non-optimal active sets remain correctly rejected. Therefore, the algorithm is robust to numerical errors in the root-finding step.
\end{proposition}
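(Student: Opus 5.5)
First I would prove strict monotonicity of $F_{n,i}$ on $I$. On $I$ every term $b_k h_k^2/(h_k^2-\lambda b_k)^2$ with $k\in\setK\backslash\overline{\setS}_{n,i}$ has a positive denominator base $h_k^2-\lambda b_k>0$. Its derivative $2b_k^2h_k^2/(h_k^2-\lambda b_k)^3$ is therefore positive, so $\mathrm{LHS}_{n,i}$ is strictly increasing ($\setK\backslash\overline{\setS}_{n,i}\neq\emptyset$ since $i\le K-1$). Since $D_{n,i}>0$, $q_{n,i}(\lambda)$ is affine increasing and positive, so $\mathrm{RHS}_{n,i}\propto D_{n,i}/q_{n,i}^2$ is strictly decreasing. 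Hence $F_{n,i}$ is strictly increasing and its root $\lambda^*$ is unique. A bisection that keeps the bracket endpoint with $F_{n,i}\ge 0$ returns $\hat\lambda\ge\lambda^*$ with $\varepsilon$ halving each step. This also explains the standing assumption $\mathrm{LHS}\ge\mathrm{RHS}$.

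The common tool for the four cases is continuity plus the affine structure of \Cref{lem: affine_boundary}. The map $\lambda\mapsto\alpha^{(iii)}_{n,i}(\lambda)$ is continuous and strictly decreasing. Each boundary $g_j(\lambda)=\iota_j(\alpha^{(iii)}_{n,i}(\lambda))$ is affine with slope $-C_{j,n,i}<0$, so $\lambda-g_j(\lambda)$ is strictly increasing with slope $1+C_{j,n,i}$.

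\emph{Case (i).} All the defining inequalities are strict at $(\lambda^*,\alpha^*)$ and all the functions involved are continuous, so they persist for small $\varepsilon$.

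\emph{Case (ii).} Because $\alpha^{(iii)}_{n,i}$ is decreasing, $\hat\alpha\le\alpha^*$. If $\alpha^*=\alpha_{n+1}$, then $\hat\alpha$ moves into the interior $(\alpha_n,\alpha_{n+1})$ for small $\varepsilon$. If $\alpha^*=\alpha_n$, then $\hat\alpha$ drops into the previous interval. There I would argue that at a transition point the two adjacent orderings differ only by a swap of two equal $\iota$'s. The set $\overline{\setS}_{n,i}$ is unchanged unless the swap straddles rank $i$, and in that case the $\lambda$-condition is at its boundary, which is handled by case (iii). Alternatively, the closed interval $[\alpha_n,\alpha_{n+1}]$ in \eqref{eq: conditions to hold alpha lambda (crossing point)} is only needed up to the continuity of the $\iota_k$.

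\emph{Case (iii).} With $\lambda^*=g_{\pi_n(i)}(\lambda^*)$, we get $\hat\lambda-g_{\pi_n(i)}(\hat\lambda)=(1+C)\varepsilon\ge0$, so the lower bound holds exactly. The strict upper bound persists by continuity, assuming $\iota_{\pi_n(i)}<\iota_{\pi_n(i+1)}$ at $\alpha^*$; the equal case was excluded in the proof of \Cref{th: algorithm core theorem}.

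\emph{Case (iv).} The failing condition at $\bar\lambda$ is either a strict violation or an equality failing a half-open bound. Strict violations persist under small perturbation by continuity. For the boundary-type failures, the relevant ones are $\bar\lambda=g_{\pi_n(i+1)}(\bar\lambda)$ (failing the open upper bound) and $\alpha$ sitting exactly at an endpoint. For the first, the same monotone slope argument gives $\hat\lambda-g_{\pi_n(i+1)}(\hat\lambda)=(1+C)\varepsilon>0$, so the condition still fails. This one-sidedness is exactly why $\hat\lambda\ge\lambda^*$ is required.

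The main obstacle is the bookkeeping in cases (ii) and (iv) at $\alpha$-endpoints, where the permutation $\pi_n$ itself changes. I would handle it by showing that the only changes there are ties between the two $\iota$'s that swap, and then reducing to the affine slope comparison above.
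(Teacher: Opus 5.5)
Your proposal is correct and follows essentially the same route as the paper. The shared steps are strict monotonicity of $\mathrm{LHS}_{n,i}$ and $\mathrm{RHS}_{n,i}$, and persistence of strict inequalities in the interior case (which the paper makes quantitative through the mean-value bounds $L_{\alpha,n,i}$ and $L_g$). They continue with the adjacent-swap argument when $\hat{\alpha}$ drops into $(\alpha_{n-1},\alpha_n)$, and the affine slope $(1+C_{j,n,i})\varepsilon$ from \Cref{lem: affine_boundary} for both the $\lambda$-boundary and the false-positive cases.

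One small correction concerns case (ii). A swap straddling rank $i$ at $\alpha^*=\alpha_n$ is not ``handled by case (iii)''; it simply cannot occur, because then $[\iota_{\pi_n(i)}(\alpha_n),\iota_{\pi_n(i+1)}(\alpha_n))$ is empty. This is exactly how the paper concludes that the swapped index satisfies $j\neq i$.
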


\begin{proof}
$F_{n,i}$ is clearly strictly increasing on $I$ since the LHS is strictly increasing and the RHS is decreasing in $\lambda$. 

\emph{Case~(i) (interior).}
At $(\lambda^*,\alpha^*)$, all three conditions in~\eqref{eq: conditions to hold alpha lambda (crossing point)} are strict. We first verify that $\hat{\alpha}$ remains in the same $\alpha$-interval. For fixed $(n,i)$, $\alpha_{n,i}^{(iii)}(\lambda)$ is continuously differentiable on the root-search interval $I$. Hence, by the mean value theorem,
\begin{equation}
|\hat{\alpha}-\alpha^*|
\leq L_{\alpha,n,i}\varepsilon,
\end{equation}
where
\begin{equation}
L_{\alpha,n,i} \defeq \sup_{\lambda\in I}
\left|\frac{d}{d\lambda}\alpha_{n,i}^{(iii)}(\lambda)\right|.
\end{equation}
Since $\alpha^* = \alpha_{n,i}^{(iii)}(\lambda^*)\in(\alpha_n, \alpha_{n+1})$, there exists $\delta_{\alpha}>0$ such that $(\alpha^*-\delta_{\alpha}, \alpha^*+ \delta_{\alpha}) \subseteq (\alpha_n, \alpha_{n+1})$. If $L_{\alpha,n,i}\varepsilon<\delta_{\alpha}$, then $\hat{\alpha}$ remains in $(\alpha_n, \alpha_{n+1})$.

It remains to show that $\hat{\lambda}$ stays in the same $\iota$-interval. Define
\begin{equation}
    \underline{g}(\alpha)\defeq \iota_{\pi_n(i)}(\alpha),
    \qquad
    \overline{g}(\alpha)\defeq \iota_{\pi_n(i+1)}(\alpha).
\end{equation}
Since $\lambda^*\in (\underline{g}(\alpha^*),\overline{g}(\alpha^*))$, there exists $\delta_{\lambda}>0$ such that $(\lambda^*-\delta_{\lambda}, \lambda^*+\delta_{\lambda}) \subseteq (\underline{g}(\alpha^*),\overline{g}(\alpha^*))$. 
Let
\begin{equation}
    L_g \defeq
    \max\left\{
    \sup_{\alpha\in J}|\underline{g}'(\alpha)|,\,
    \sup_{\alpha\in J}|\overline{g}'(\alpha)|
    \right\},
\end{equation}
where $J=(\alpha_n,\alpha_{n+1})$. Since $|\hat{\alpha}-\alpha^*|\leq L_{\alpha,n,i}\varepsilon$, we have
\begin{align}
    \hat{\lambda}-\underline{g}(\hat{\alpha})
    &\geq \lambda^*-\underline{g}(\alpha^*)-(1+L_gL_{\alpha,n,i})\varepsilon,\\
   \hat{\lambda} - \overline{g}(\hat{\alpha})
    &\leq \lambda^* - \overline{g}(\alpha^*)+(1+L_gL_{\alpha,n,i})\varepsilon.
\end{align}
Therefore, if
\begin{equation}
    (1+L_gL_{\alpha,n,i})\varepsilon <\delta_\lambda, \text{ i.e., } \varepsilon < \frac{\delta_\lambda}{1+L_gL_{\alpha,n,i}},
\end{equation}
then $
    \hat{\lambda}\in
    (\underline{g}(\hat{\alpha}),\overline{g}(\hat{\alpha}))$.

\emph{Case~(ii) ($\alpha$-boundary).}
The case $\alpha^*=\alpha_{n+1}$ is trivial since given $\hat{\lambda} > \lambda^*$, $\hat{\alpha} < \alpha_{n+1}$ due to the monotonicity of $\alpha^{(iii)}_{n,i}$.

Assume $\alpha^*=\alpha_n$. By monotonicity and continuity of $\alpha^{(iii)}_{n,i}(\lambda)$ in $\lambda$, sufficiently small $\varepsilon$ keeps $\hat{\alpha}$ in a neighborhood of $\alpha_n$, $(\alpha_{n-1}, \alpha_n)$. At $\alpha_n$, the indices whose $\iota$-values swap ordering satisfy $\iota_{\pi_n(j)}(\alpha_n)=\iota_{\pi_n(j+1)}(\alpha_n)=\iota_{\pi_{n-1}(j)}(\alpha_n)=\iota_{\pi_{n-1}(j+1)}(\alpha_n)$. Given that $\lambda^*\in[\iota_{\pi_n(i)}(\alpha_n),\iota_{\pi_n(i+1)}(\alpha_n))$, the order swap cannot occur at the optimal set $(n,i)$ (otherwise $\lambda^*$ does not exist), i.e., $j\neq i$. Since it involves two neighboring indices, the index swapping does not change the active set $\overline{\setS}_{n,i}$,
so the active set $\overline{\setS}_{n-1,i}$ remains the same as $\overline{\setS}_{n,i}$. With the same continuity argument as in case~(i), $\hat{\lambda}$ remains in the interval $[\iota_{\pi_{n-1}(i)}(\hat{\alpha}), \iota_{\pi_{n-1}(i+1)}(\hat{\alpha}))$ for sufficiently small $\varepsilon$, so the optimal conditions are preserved.

\emph{Case~(iii) ($\lambda$-boundary).}
The interval is open at the right side, so is preserved with (i). We only need to take care of the lower-bound condition.
Suppose $\lambda^*=\iota_{\pi_n(i)}(\alpha^*)$, i.e.,
$\lambda^*=g_{\pi_n(i)}(\lambda^*)$ with $g_j(\lambda)\defeq\iota_j(\alpha^{(iii)}_{n,i}(\lambda))$.
Using \Cref{lem: affine_boundary},
\begin{equation}
    \begin{split}
    \hat{\lambda}-g_{\pi_n(i)}(\hat{\lambda})
    &= (\lambda^*+\varepsilon)-\big(g_{\pi_n(i)}(\lambda^*)-C_{\pi_n(i),n,i}\varepsilon\big)\\
    &= (1+C_{\pi_n(i),n,i})\,\varepsilon \geq 0.
    \end{split}
    \label{eq: lower_bound_preserved}
\end{equation}
Hence $\hat{\lambda}\geq \iota_{\pi_n(i)}(\hat{\alpha})$, so the lower-bound condition is preserved (strict when $\varepsilon>0$). 

\emph{Case~(iv) (false-positive protection).}
Let $(n',i')$ be non-optimal and let $\bar{\lambda}$ be a root of $F_{n',i'}$. One of the conditions of~\eqref{eq: conditions to hold alpha lambda (crossing point)} must fail. With sufficiently small $\varepsilon>0$, $\tilde{\lambda} \defeq \bar{\lambda} + \varepsilon$ should continue to fail one of the conditions. The same logic for (i) and (ii) can be applied similarly to guarantee that $\tilde{\lambda}, \tilde{\alpha}\defeq \alpha_{n',i'}^{(iii)}(\tilde{\lambda})$ continues to fail in the interior and if $\tilde{\alpha}$ is at the $\alpha$-boundaries.  Consider the boundary-failure case
$\bar{\lambda}=g_{\pi_{n'}(i'+1)}(\bar{\lambda})$, where $\bar{\lambda}\notin[g_{\pi_{n'}(i')}(\bar{\lambda}), g_{\pi_{n'}(i'+1)}(\bar{\lambda}))$ but may fall into the set with the error. For $\tilde{\lambda}=\bar{\lambda}+\varepsilon$ with $\varepsilon>0$, \Cref{lem: affine_boundary} gives
\begin{equation}
    \tilde{\lambda}-g_{\pi_{n'}(i'+1)}(\tilde{\lambda})
    = (1+C_{\pi_{n'}(i'+1),n',i'})\,\varepsilon > 0.
    \label{eq: false_positive_excluded}
\end{equation}
Therefore $\tilde{\lambda}>\iota_{\pi_{n'}(i'+1)}(\tilde{\alpha})$, so the pair is still rejected. Hence sufficiently small numerical error cannot turn a non-optimal pair into an accepted one.
\end{proof}

\end{document}